\documentclass[journal]{IEEEtran}
\usepackage{graphics}
\usepackage{graphicx}
\usepackage{amsfonts}
\usepackage{multirow}
\usepackage{epstopdf}
\usepackage{color}
\usepackage{float}
\usepackage[font=small]{caption}
\usepackage{bm}
\usepackage{algorithm}
\usepackage{algorithmic}
\usepackage{amsmath}
\usepackage{pifont}
\usepackage[utf8]{inputenc}
\usepackage{mathtools}
\usepackage{amssymb}
\usepackage{amsthm}
\usepackage{lipsum}% http://ctan.org/pkg/lipsum
\usepackage{bbm}
\usepackage{enumitem}   
\usepackage{subcaption}
\usepackage{cite}
\usepackage{steinmetz} % For Phase Symbol
\usepackage{placeins} % For FloatBarrier

% Tables
\usepackage[table]{xcolor}
\setlength{\arrayrulewidth}{1pt}
\newcolumntype{s}{>{\columncolor[HTML]{FE6F5E}} c}
\newcolumntype{t}{>{\columncolor[HTML]{5D8AA8}} c}

\usepackage{booktabs}% http://ctan.org/pkg/booktabs

% New Theorems
\newtheorem{theorem}{Theorem}

\newtheorem{assumption}{A}

\theoremstyle{definition}
\newtheorem{definition}{Definition}

% Custom commands

\ifCLASSINFOpdf
  % \usepackage[pdftex]{graphicx}
  % declare the path(s) where your graphic files are
  % \graphicspath{{../pdf/}{../jpeg/}}
  % and their extensions so you won't have to specify these with
  % every instance of \includegraphics
  % \DeclareGraphicsExtensions{.pdf,.jpeg,.png}
\else
  % or other class option (dvipsone, dvipdf, if not using dvips). graphicx
  % will default to the driver specified in the system graphics.cfg if no
  % driver is specified.
  % \usepackage[dvips]{graphicx}
  % declare the path(s) where your graphic files are
  % \graphicspath{{../eps/}}
  % and their extensions so you won't have to specify these with
  % every instance of \includegraphics
  % \DeclareGraphicsExtensions{.eps}
\fi

\begin{document}

\title{Sparse Convolutional Beamforming \\ for Ultrasound Imaging}

\author{Regev~Cohen,~\IEEEmembership{Student,~IEEE,}
        Yonina~C.~Eldar,~\IEEEmembership{Fellow,~IEEE}
\thanks{This research was supported by the I-CORE Program of the Planning and Budgeting Committee and The Israel Science Foundation 1802/12.}}

\maketitle

% As a general rule, do not put math, special symbols or citations
% in the abstract or keywords.
\begin{abstract}
The standard technique used by commercial medical ultrasound systems to form B-mode images is delay and sum (DAS) beamforming. However, DAS often results in limited image resolution and contrast, which are governed by the center frequency and the aperture size of the ultrasound transducer. A large number of elements leads to improved resolution but at the same time increases the data size and the system cost due to the receive electronics required for each element. Therefore, reducing the number of receiving channels while producing high quality images is of great importance. In this paper, we introduce a nonlinear beamformer called COnvolutional Beamforming Algorithm (COBA), which achieves significant improvement of lateral resolution and contrast. In addition, it can be implemented efficiently using the fast Fourier transform. Based on the COBA concept, we next present two sparse beamformers with closed form expressions for the sensor locations, which result in the same beam pattern as DAS and COBA while using far fewer array elements. Optimization of the number of elements shows that they require a minimal number of elements which is on the order of the square root of the number used by DAS. The performance of the proposed methods is tested and validated using simulated data, phantom scans and \textit{in vivo} cardiac data. The results demonstrate that COBA outperforms DAS in terms of resolution and contrast and that the suggested beamformers offer a sizable element reduction while generating images with an equivalent or improved quality in comparison to DAS.   
\end{abstract}

% Note that keywords are not normally used for peerreview papers.
\begin{IEEEkeywords}
Medical ultrasound, array processing, beamforming, contrast resolution, sparse arrays, beam pattern.
\end{IEEEkeywords}

\section{Introduction}
\label{sec:intro}

\IEEEPARstart{U}{ltrasound} imaging is one of the most common medical imaging modalities, allowing for non-invasive investigation of anatomical structures and blood flow. Cardiac, abdominal, fetal and breast imaging are some of the applications where it is extensively used as a diagnostic tool.

In a conventional scanning process, short acoustic pulses are transmitted along a narrow beam from an array of transducer elements. During their propagation echoes are scattered by acoustic impedance perturbations in the tissue, and detected by the array elements. The backscattered radio-frequency (RF) signals are then processed in a way referred to as beamforming to create a line in the image. The beamformer is designed to focus and steer the ultrasound transducer towards a desired direction or point in space. The main goal of the beamformer is to generate a beam pattern with a narrow main lobe and low side lobes \cite{thomenius1996evolution}. The beam main-lobe width dictates the system resolution, while the side-lobe level controls contrast so that the beam properties have a great impact on image quality \cite{ranganathan2003novel}.

In medical ultrasound imaging, the standard beamformer is delay and sum (DAS) \cite{thomenius1996evolution,karaman1995synthetic} which consists of delaying and weighting the reflected echoes before summing them. While its simplicity and real-time capabilities make DAS widely used in ultrasound scanners, it exhibits limited imaging resolution and contrast \cite{jensen2006synthetic}. Increasing the number of elements, while keeping the array pitch below half a wavelength to avoid grating lobes \cite{lockwood1998real}, results in enhanced resolution. However, this increases channel data size and the system cost due to the receive electronics required for each element. Therefore, reducing the number of receiving channels while producing high quality images is of great importance.  

\subsection{Related Work}
Considering a full array, several methods to improve image quality have been proposed. Adaptive beamformers improve resolution without sacrificing contrast by dynamically changing the receive aperture weights based on the received data statistics \cite{van1997beamforming}. The most common is Capon/minimum variance (MV) beamforming \cite{viola2005adaptive} which offers better contrast and resolution than DAS. However, it is difficult to apply in real-time due to the calculation of a covariance matrix and its inverse at each time instant. Its application to ultrasound imaging was studied extensively over the last decade and many improved versions of MV with reduced complexity have been proposed \cite{chen2013improved,kim2014fast,synnevag2007adaptive,asl2009minimum}. Nilsen \textit{et. al.} suggest a beamspace adaptive beamformer, BS-Capon, based on orthogonal beams formed in different directions \cite{nilsen2009beamspace}. Jensen \textit{et. al.} developed an adaptive beamformer called multi-beam Capon that is based on multibeam covariance matrices \cite{jensen2012approach}. Using similar concepts, Jensen and Austeng proposed a method called iterative adaptive approach (IAA) \cite{jensen2014iterative,yardibi2008nonparametric}.

Other related techniques have been presented such as applying various finite impulse response (FIR) filters on each receive channel, instead of single apodization weights \cite{guenther2009robust}. Chernyakova \textit{et. al.} proposed a beamformer called iMAP where both the interference and the signal of interest are viewed as random variables and the beamformer output is the maximum-a-posteriori (MAP) estimator of the signal computed in an iterative fashion. An approach based on the spatial correlation of echo signals called SLSC has been suggested in \cite{lediju2011short,dahl2011lesion}. However, B-mode techniques aim at imaging the magnitude of the backscattered echoes, whereas SLSC attempts to calculate their spatial coherence. The authors in \cite{matrone2015delay,matrone2017high,matrone2017depth,matrone2015ultrasound,matrone2016ultrasound,matrone2018experimental,mozaffarzadeh2017double} presented a non-linear beamformer called FDMAS that is based on computing the auto-correlation of the RF signals. This approach leads to improved resolution and contrast at the expense of high computational load, resulting in slow runtime. 

Several studies investigate compressed sensing (CS \cite{eldar2012compressed,eldar2015sampling}) techniques for data reduction, based on the assumption that the ultrasound signal can be sparsely represented in an appropriate basis. Wagner \textit{et. al.} proposed a method for reducing the sampling rate \cite{wagner2012compressed} by treating ultrasound signals within the finite rate of innovation \cite{eldar2012compressed,eldar2015sampling} framework. Sub-Nyquist data acquisition from each transducer element and low-rate processing were presented in \cite{chernyakova2014fourier} and were later extended to plane-wave imaging \cite{chernyakova2018fourier}. Liebgott \textit{et. al.} studied the reconstruction performance of ultrasound signals in different bases \cite{liebgott2013pre}. In \cite{liu2017compressed}, the authors introduced a beamforming technique called compressed sensing based synthetic transmit aperture which increases the frame rate by transmitting a small number of randomly apodized plane-waves and uses CS reconstruction to recover the full channel data.       
None of the works above consider element reduction.

Possible approaches to reduce the number of receiving channels without compromising image quality include subaperture processors and microbeamformers \cite{larson19932}, whereby part of the beamformation is moved into the probe handle. However, this requires manufacturing expensive integrated circuits with high power consumption \cite{savord2003fully,fuller2005sonic,lee2004miniaturized}. Alternative strategies that have gained a lot of interest are based on using standard DAS with sparse arrays where some of the elements are removed, including deterministic designs such as vernier arrays and random designs \cite{davidsen1994two,brunke1997broad,yen2000sparse,austeng2002sparse,karaman2009minimally,diarra2013design,ramadas2014application,emmanuel2017validation,mitra2010general}. These works are concerned with designing a combined transmit/receive effective aperture mostly for 3D imaging, whereas we propose methods that can be applied in both active and passive settings.
% Moreover, random thinned arrays exhibit increased average side lobe levels.
Another approach is 2D row–column-addressed arrays for 3D imaging \cite{chen2011cmut,rasmussen20133d,rasmussen20133,rasmussen20153,christiansen20153}, in which every row and column in the array acts as one large element. However, this work is limited to 3D imaging and the use of large elements leads to a considerable increase in edge effects that limit image quality \cite{rasmussen20133d}. 
% Moreover, these solutions are mainly for 3D ultrasound imaging.        

\subsection{Contributions}

The main goal of this work it to reduce the number of receiving channels while preserving or improving the image quality in comparison to a DAS beamformer operating on the full array. To that end, we propose a new beamforming technique and present two deterministic designs of sparse arrays based on it.

We first introduce a non-linear beamformer referred to as convolutional beamforming algorithm (COBA), which is based on the convolution of the delayed RF signals prior to summation. COBA can be implemented efficiently using the fast Fourier transform (FFT), thus, making it suitable for real-time application. We analyze the beam pattern generated by COBA and show its relation to the sum co-array \cite{cohen2018optimized,hoctor1990unifying} which has twice the size of the physical aperture and triangle-shaped apodization. Consequently, COBA demonstrates significant improvement of lateral resolution and image contrast.

Then, we provide a definition of sparse arrays based on the sum co-array, which combined with COBA leads to two designs of sparse convolutional beamformers that require fewer receiving elements than DAS. The first technique, called sparse convolutional beamforming algorithm (SCOBA), utilizes significantly fewer elements while obtaining a beam pattern similar to that of DAS in terms of resolution. The second method, termed sparse convolutional beamforming algorithm with super-resolution (SCOBAR), offers increased resolution at the expense of a smaller, yet notable, channel reduction. We then describe how to apply apodization directly on the sum co-array in order to improve its contrast. Optimization of the sparse designs reveals that the minimal number of elements required to obtain the beam patterns achieved by SCOBA and SCOBAR are both proportional to $\sqrt{N}$, where $N$ is the number of channels in the fully populated array. Thus, these approaches offer sizable element reduction without compromising image quality.

Next, we use simulations of point-reflectors and an anechoic cyst to provide qualitative and quantitative assessments of image quality using the proposed beamformers. We show that COBA achieves significant improvement of resolution and contrast compared to DAS. In addition, SCOBA and SCOBAR demonstrate similar and enhanced performance with respect to DAS while operating with a low number of channels. These results are verified using phantom scans and \textit{in vivo} cardiac data, proving that the beamformers presented are suitable for clinical use in real-time scanners. 

The rest of the paper is organized as follows. In Section\,\ref{sec:model}, we describe the signal model and formulate our problem. Section \ref{sec:convbf} introduces the convolutional beamformer, applied to ultrasound image formation, and analyzes its beam pattern. We present and describe in detail sparse array designs in Section \ref{sec:sparseconvbf} and propose two beamformers that utilize fewer elements. We then derive the minimal number of channels required by both approaches. In Section \ref{sec:results}, the performance of the suggested techniques is evaluated using simulated and experimental data. Finally, Section \ref{sec:conclude} concludes the paper.            

\section{Array Theory and Problem Formulation}
\label{sec:model}

\subsection{Signal Model and Beam Pattern}
We consider a uniform linear array (ULA) comprised of $2N-1$ transducer elements aligned along the lateral axis $x$. The sensor locations $\{p_n\}$ are given by
\begin{equation}
p_n=(nd,z=0)\quad n=-(N-1),...,N-1,
\end{equation}
where $d$ is the spacing (pitch) between the centers of the individual elements and $z$ denotes the axial axis. Upon reception, an energy pulse is backscattered from a point in space $(r,\theta)$, propagates through the tissue at speed $c$ 
% At time $t \geq0$ its coordinates are $(x,z)=(ct\sin\theta,ct\cos\theta)$. a
and is received by all array elements at a time depending on their locations. We denote the signal received at the $0$th element by
\begin{equation}
f(t)=\tilde{f}(t)e^{jw_0t}.
\end{equation}  
Here $w_0$ is the transducer center frequency and $\tilde{f}(t)$ is the signal envelope. The sensors spatially sample the signal such that the signal $f_n(t)$ at the $n$th element is given by
\begin{equation}
f_n(t)=f(t-\tau_n)=\tilde{f}(t-\tau_n)e^{jw_0(t-\tau_n)},
\label{eq:fn}
\end{equation}
where $\tau_n$ is a time delay. To derive an expression for the delays, we introduce the following assumptions:
\begin{assumption}[Narrow-Band] 
The signal $f(t)$ is narrow-band, i.e., the bandwidth of the envelope is small
enough so that
\begin{equation}
\tilde{f}(t-\tau_n)\simeq \tilde{f}(t),\quad n=-(N-1),...,N-1.
\end{equation}
\label{ass:narrowband}
\end{assumption}
\begin{assumption}[Far-Field] 
The point $(r,\theta)$ is in the far-field region of the array, thus, the input signal impinging on the array is considered to be a plane-wave.
\label{ass:farfield}
\end{assumption}
\hspace{-0.4cm}Under the assumptions above, we can rewrite (\ref{eq:fn}) as
\begin{equation}
f_n(t)=\tilde{f}(t)e^{jw_0(t-\tau_n)},
\label{eq:newfn}
\end{equation}
where $\tau_n=\frac{d\sin\theta}{c}n$, i.e., the delays are approximated by phase shifts independent of $r$.

A beamformer processes each sensor output by a filter with impulse response $\tilde{g}_n(t)= g_n(t+\alpha_n)$ where 
\begin{equation}
\alpha_n=\frac{d\sin\theta_0}{c}n
\end{equation}
for a direction of interest  $-\frac{\pi}{2}\leq\theta_0\leq \frac{\pi}{2}$. Thus, the output of the $n$th element is
\begin{equation}
y_n(t)=\tilde{g}_n(t)\underset{t}{\ast} f_n(t)=g_n(t+\alpha_n)\underset{t}{\ast} f_n(t-\tau_n),
\label{eq:nthoutput}
\end{equation}
where $\underset{t}{\ast}$ denotes temporal convolution.
The beamformer then sums the outputs to obtain the array output
\begin{equation}
y(t)=\sum_{n=-(N-1)}^{N-1} y_n(t).
\label{eq:output}
\end{equation}
In the frequency domain, (\ref{eq:output}) may be expressed as
\begin{align}
\begin{split}
Y(\omega)&=\sum_{n=-(N-1)}^{N-1} G_n(\omega)F(\omega)e^{-j\omega(\tau_n-\alpha_n)} \\
&=F(\omega)\sum_{n=-(N-1)}^{N-1} G_n(\omega)e^{-j\omega(\tau_n-\alpha_n)},
\label{eq:outputfreq}
\end{split}
\end{align} 
where $Y(\omega)$, $F(\omega)$  and $G_n(\omega)$ are the temporal Fourier transforms of $y(t)$, $f(t)$ and $g_n(t)$ respectively.

To analyze the response of a beamformer to an input field, we assume the input to be a unity amplitude plane wave
\begin{equation}
f(t)=e^{jw_ot},
\label{eq:basisinput}
\end{equation} 
where $\tilde{f}(t)\equiv 1$.
Substituting (\ref{eq:basisinput}) into (\ref{eq:outputfreq}), we obtain
\begin{align}
\begin{split}
Y(\omega)&=\delta(\omega-\omega_0)\sum_{n=-(N-1)}^{N-1} G_n(\omega)e^{-j\omega(\tau_n-\alpha_n)} \\
&=\delta(\omega-\omega_0)\sum_{n=-(N-1)}^{N-1} G_n(\omega_0)e^{-j\omega_0(\tau_n-\alpha_n)},
\label{eq:freqresponse}
\end{split}
\end{align}
where $\delta(\omega)$ is the Dirac delta.
Given the explicit expressions for $\tau_n$ and $\alpha_n$, we can rewrite (\ref{eq:freqresponse}) as a function of $\theta$
\begin{equation}
Y(\omega,\theta)=\delta(\omega-\omega_0)
\sum_{n=-(N-1)}^{N-1} G_n(\omega_0)e^{-j\omega_0\frac{nd}{c}\left(\sin\theta-\sin\theta_0\right)}.
\label{eq:freqthetaresponse}
\end{equation}
The sum on the right hand side of (\ref{eq:freqthetaresponse}) is defined as the beam pattern of the beamformer
\begin{equation}
H(\theta)\triangleq
\sum_{n=-(N-1)}^{N-1} G_n(\omega_0)e^{-j\omega_0\frac{nd}{c}\left(\sin\theta-\sin\theta_0\right)}.
\end{equation}
For simplicity, we assume that $\theta_0=0$ which yields 
\begin{equation}
H(\theta)=\sum_{n=-(N-1)}^{N-1} G_n(\omega_0)e^{-j\omega_0\frac{nd\sin\theta}{c}}.
\label{eq:bp}
\end{equation}
The beam pattern represents the beamformer response to variations in the input field.

In standard delay and sum (DAS) beamforming we have
\begin{equation}
g_n(t)=w_n\delta(t),\quad n=-(N-1),...,N-1,
\end{equation}
where $w_n$ is the weight of the $n$th element. 
Thus,
\begin{equation}
H_\text{DAS}(\theta)=\sum_{n=-(N-1)}^{N-1} w_ne^{-j\omega_0\frac{nd\sin\theta}{c}}.
\label{eq:das}
\end{equation}
A plot of a beam pattern generated by a standard DAS beamformer is presented in Fig. \ref{fig:beampattern}.
The main lobe width of the beam pattern affects system resolution, while the peak side lobe level determines image contrast and interference levels \cite{jensen1999linear}. 
% Thus, we aim to create a beam pattern with a narrow main lobe and low side lobes, which are two opposing objectives. 

Denote the transducer wavelength by $\lambda=2\pi c/\omega_0$ and the array's aperture size by $L=2(N-1)d$. The angle $\theta_1$ of the first zero in the beam pattern is given by 
\begin{equation}
\sin\theta_1=\frac{\lambda}{L}.
\end{equation} 
Hence, a large array or a high center frequency, yields a narrow main lobe. In contrast, the magnitude of the side lobes is controlled by the weights $\{w_n\}$, known as the aperture function.
The side lobes can be reduced by choosing an aperture function that is smooth like a
Hanning window or a Gaussian shape. This, however, broadens the main lobe width, decreasing system resolution.

Before concluding our discussion on DAS beamforming, we note that in practice ultrasound systems perform beamforming in the digital domain: analog signals are amplified and sampled by an
analog to digital converter \cite{eldar2015sampling}, preceded by an anti-aliasing filter. Sampling rate reduction is discussed in \cite{chernyakova2014fourier}.
In addition, assumptions A\ref{ass:narrowband} and A\ref{ass:farfield} do not typically hold in ultrasound imaging. The signal $f(t)$ is wide-band and imaging is performed in the near-field, leading to time delays that depend non-linearly on both $r$ and $\theta$ as
\begin{equation}
% \tau_n=\frac{1}{2}\left(t+\sqrt{t^2-4\frac{|n|d}{c}t\sin\theta+4\Big(\frac{nd}{c}\Big)^2}\right),
\tau_n = \frac{r+\sqrt{r^2-2ndr\sin\theta+(nd)^2}}{c}.
\end{equation}
% where we substitute $r=ct$.
However, the approach taken here is convenient in introducing the major concepts such as in lobe and side lobes which affect the image quality \cite{jensen1999linear} and is standard in the literature.

\begin{figure}[h]
 \centering
 \includegraphics[trim={2cm 3cm 2cm 3cm},clip,height = 4cm, width = 0.9\linewidth]{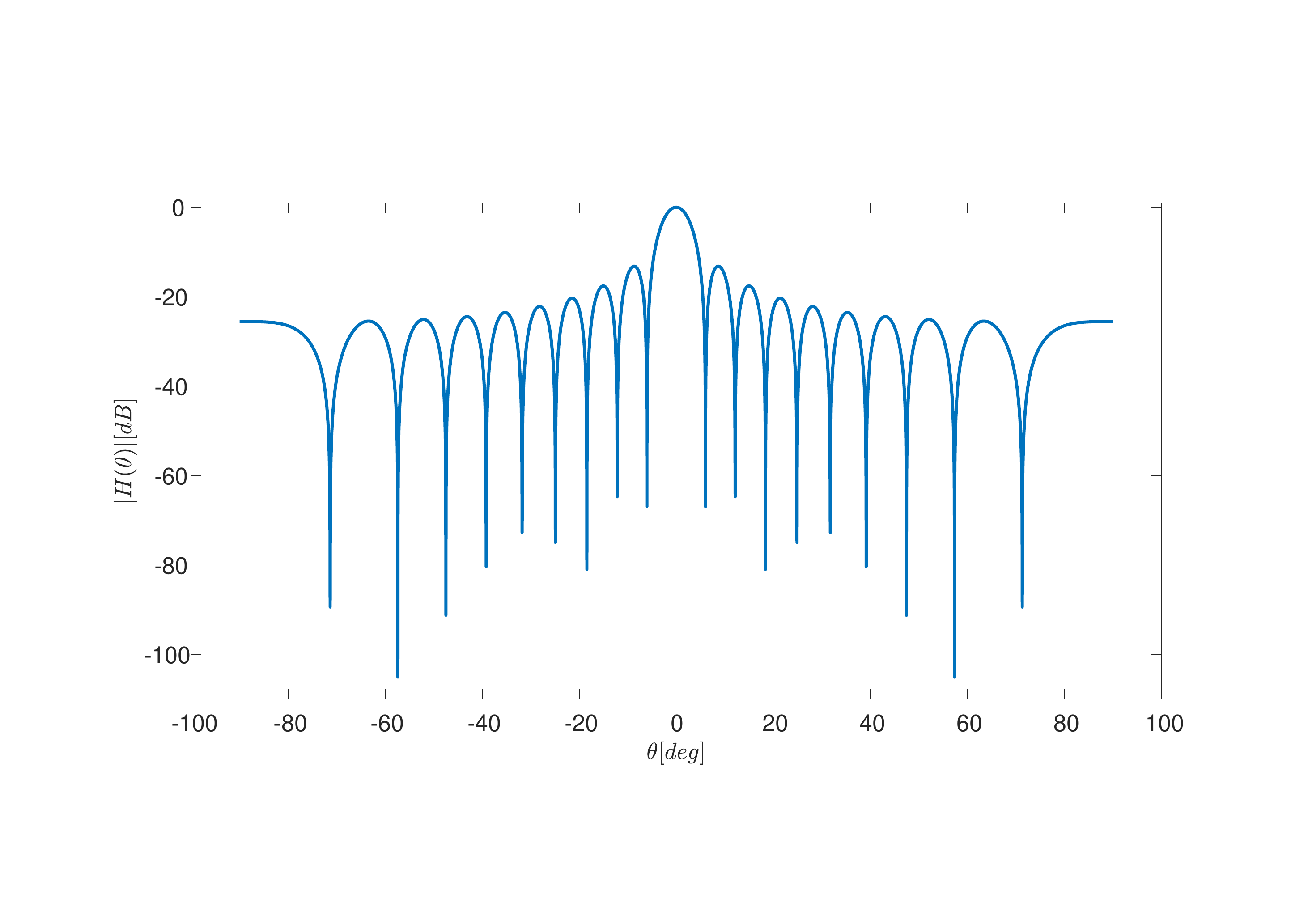}
 \caption{Beam pattern magnitude as a function of angle for $N = 10,\,w_n=1,\lambda=1,d=\frac{1}{2}$.}
  \label{fig:beampattern}
 \end{figure}  

\subsection{Problem Formulation} 

% For ease of analysis, we introduce an auxiliary variable $p$ defined as
% \begin{equation}
% p \triangleq  e^{-2\pi j\frac{f_0\sin\theta}{c}d_x}.
% \label{eq:p}
% \end{equation}
% By substituting (\ref{eq:p}) into (\ref{eq:beampattern}), the beam pattern can be expressed as a polynomial
% \begin{equation}
% H(p)=\sum_{n=-(N-1)}^{N-1} w_np^n,
% \label{eq:bpp}
% \end{equation}
% where every element contributes to the sum a different power of $p$ according to its location and the order of the polynomial determines the aperture size. We shall henceforth refer to (\ref{eq:bpp}) as the beam pattern polynomial and we use this representation for analyzing and designing new beamforming techniques in the following sections.   

The goal of this work is to design arrays with fewer elements than $2N-1$ together with a beamforming method which enables obtaining the beam pattern given by (\ref{eq:das}) or an improved pattern in terms of resolution and image contrast. 
To that end, we first introduce a new beamformer based on a lateral convolution operation and show that it leads to improved resolution by analyzing its beam pattern. Next, we propose two sparse beamforming techniques. The first beamformer uses fewer channels and demonstrates a lateral resolution similar to that of DAS, whereas the second beamformer achieves a twofold improvement in resolution at the expense of a smaller element reduction. Analysis of these approaches shows that the minimal number of elements required to obtain the desired beam patterns is proportional to $\sqrt{N}$.

Throughout the paper, we assume the element pitch $d$ and the transducer center frequency $\omega_0$ are fixed. In addition, we constraint the array configurations so that all element locations satisfy $|x|\leq L/2$. We show that this limitation on the physical array aperture does not prevent us from creating an effective aperture which is larger in size than $L$. 
Note that we assume an odd number of elements $2N-1$ only for clarity of presentation so that the center of the array is well-defined. However, the results presented hold also for an even number of elements.

\section{Convolutional Beamforming and its \\ Beam Pattern}
\label{sec:convbf}
In this section, we present a new non-linear beamformer called COnvolutional Beamforming Algorithm (COBA). The proposed beamformer is based on a convolution operation
% is similar to Filtered Delay Multiply and Sum (F-DMAS) method \cite{lim2008confocal,matrone2015delay}, however, in contrast to the latter, it 
% including the self-products that prevents energy loss \cite{su2018efficient}
and can be implemented efficiently using the FFT. We then introduce the concept of sum co-array \cite{hoctor1990unifying,cohen2018optimized} to analyze the beam pattern of COBA, showing it outperforms DAS in terms of resolution and image contrast.
% Then, we devise two sparse variations of COBA which require a low number of elements that is on the order of $\sqrt{N}$.     

\subsection{Convolutional Beamforming}
\label{subsec:convbf}
Consider the delayed signals $y_n(t)$ given by (\ref{eq:nthoutput}) where $g_n(t)=w_n\delta(t)$ as in DAS.
For simplicity, we assume unity weights $w_n=1$. An extension for arbitrary apodization is given in Section\,\ref{subsec:apod}. 
% Note that the beam pattern polynomial which corresponds to DAS processing is given by (\ref{eq:bpp}).
Inspired by the work on transmit/receive pair array synthesis \cite{hoctor1990unifying}, we define a new beamformed signal as 
\begin{equation}
\bar{y}(t)= \sum_{n=-(N-1)}^{N-1}\sum_{m=-(N-1)}^{N-1} u_n(t)u_m(t),
\label{eq:products}
\end{equation}
where
\begin{equation}
u_n(t)=\exp\{j\phase{y_n(t)}\}\sqrt{|y_n(t)|},\quad -(N-1)\leq n\leq N-1, 
\label{eq:un}
\end{equation}
with $\phase{y_n(t)}$ and $|y_n(t)|$ being the phase and modulus of $y_n(t)$ respectively. 
The operation in (\ref{eq:un}) ensures that the amplitude of each product in (\ref{eq:products}) is on the same order of that of the RF signals $y_n(t)$. This in turn means that the dynamic range of the resultant image will be similar to that obtained by DAS.

Computing (\ref{eq:products}) requires all possible signal pair combinations, i.e., $\binom{2N-1}{2}$ multiplications. Thus, conventionally the computation load for each pixel is $\mathcal{O}(N^2)$, which may lead to slow runtime. This complexity can be substantially reduced by noticing that the beamformed output (\ref{eq:products}) is equivalent to
\begin{equation}
\bar{y}(t)= \sum_{n=-2(N-1)}^{2(N-1)} s_n(t),
\label{eq:bconv}
\end{equation} 
where 
\begin{equation}
s_n = \sum_{(i,j:\,i+j=n)} u_i(t)u_j(t),\quad n=-2(N-1),...,2(N-1).
\label{eq:sn}
\end{equation}
Defining ${\bf s}(t)$ and ${\bf u}(t)$ as the length $2N-1$ vectors whose entries are $s_n(t)$ and $u_n(t)$ receptively, we have that
\begin{equation}
{\bf s}(t) = {\bf u}(t)\underset{s}{\ast}{\bf u}(t),
% =\mathcal{F}^{-1}\big\{{\bf Y}(f)\odot{\bf Y}(f)\big\}
\label{eq:sconv}
\end{equation}  
where $\underset{s}{\ast}$ denotes a discrete linear convolution in the lateral direction. Thus, the vector $\bf s$ can be computed using an FFT by zero-padding $\bf u$ to length $2N-1$, compute the Fourier transform of the result, square each entry and then perform the inverse Fourier transform to get $\bf s$. Thus, the beamformed signal $\bar{y}(t)$ is obtained with low complexity of $\mathcal{O}(N\log N)$ operations.

The temporal products comprising the signal $\bar{y}(t)$ translate to a convolution in the frequency domain with respect to the axial direction, leading to direct current (DC) and harmonic components in the spectrum of $\bar{y}(t)$ \cite{matrone2015delay,park2016delay}. Thus, an additional processing step is required to remove the baseband. The final output of our convolutional beamformer is given by 
\begin{equation}
y_\text{\tiny COBA}(t) = h_\text{BP}(t)\underset{t}{\ast}\bar{y}(t), 
\label{eq:coba}
\end{equation} 
where $h_\text{BP}(t)$ is a band-pass (BP) filter centered at the harmonic frequency $2\omega_0$. 
% The time index is added in (\ref{eq:coba}) to emphasize that the BP filter is applied in the axial direction.
A summary of convolutional beamforming is presented in Algorithm \ref{alg:coba} where the choice of weights is explained in Section \ref{subsec:apod}.

\begin{algorithm}
\caption{\small {\bf CO}nvolutional {\bf B}eamforming {\bf A}lgorithm (COBA)}
\label{alg:coba}
{\fontsize{9}{15}\selectfont
\begin{algorithmic} 
\REQUIRE Delayed RF signals $\{y_n(t)\}$, weights $\{w_n\}$. 
\STATE {\bf 1:} Compute $u_n(t)=\exp\{j\phase{y_n(t)}\}\sqrt{|y_n(t)|}.$ 
\STATE {\bf 2:} Set weights $\tilde{w}_n=\frac{w_n}{(2N-1)-|n|}.$ 
\STATE {\bf 3:} Calculate ${\bf s}(t)={\bf u}(t)\underset{s}{\ast}{\bf u}(t)$ using FFT.  
\STATE {\bf 4:} Evaluate the weighted sum
\begin{equation*}
\bar{y}(t)= \sum_{n=-2(N-1)}^{2(N-1)} \tilde{w}_ns_n(t) .
\end{equation*}
\STATE {\bf 5:}  Apply a band-pass filter 
\begin{equation*}
y_\text{\tiny COBA}(t) = h_\text{BP}(t)\underset{t}{\ast} \bar{y}(t). 
\end{equation*}
\ENSURE Beamformed signal $y_\text{\tiny COBA}(t)$.  
\end{algorithmic}}
\end{algorithm}

We note that COBA involves computing pair-wise products of the RF signals as in FDMAS. However, in contrast to FDMAS, it consists of all possible products, including the self-products
for $n = m$ and repetitions created by interchanging $n$ and $m$. This allows to avoid the high computational complexity and partial energy loss which FDMAS suffers from \cite{su2018efficient}.
In addition, the works related to FDMAS did not consider element reduction which is the main contribution of this paper, and is described in Section \ref{sec:sparseconvbf}.

\subsection{Beam Pattern Analysis}
We now analyze the beam pattern of the proposed convolutional beamformer to show that it outperforms standard DAS beamforming in terms of lateral resolution and image contrast. To this end, we use the following definitions.
\theoremstyle{definition}
\begin{definition}{Position Set:}
Consider a linear array with $d$ being the minimum spacing of the underlying grid on which sensors are assumed to be located. The \textit{position set} is defined as an integer set $I$ where $n\in I$ if there is a sensor located at $nd$.
\label{def:positionset}
\end{definition}
In the interest of brevity, we refer to a linear array with position set $I$ as a linear array $I$.
\theoremstyle{definition}
\begin{definition}{Sum Co-Array:}
Consider a linear array $I$. Define the set
\begin{equation}
\widetilde{S_I} = \{n+m:\quad n,m\in I\}.
\end{equation} 
Note that $\widetilde{S_I}$ includes repetitions of its elements. We also define the set $S_I$, referred to as the sumset of $I$, which consists of the distinct elements of $\widetilde{S_I}$.
The \textit{sum co-array} of $I$ is defined as the array whose position set is $S_I$.
\label{def:sumarray}
\end{definition}
As an example, the sum co-array of an $M$ element ULA is another ULA with $2M-1$ elements. The number of elements in the sum co-array directly determines the number of non-zeros in the convolutional signal given by (\ref{eq:sconv}). 
\theoremstyle{definition}
\begin{definition}{Intrinsic Apodization:} 
Consider a linear array $I$ and define a binary vector $\mathbbm{1}_I$ whose $n$th entry is 1 if $n\in I$ and zero otherwise.
The \textit{intrinsic apodization} is an integer vector defined as 
\begin{equation}
{\bf a}=\mathbbm{1}_I\ast \mathbbm{1}_I.
\label{eq:intrinsicapodization}
\end{equation}
The intrinsic apodization vector is related to $S_I$ and $\widetilde{S_I}$ in the following way. For every $n\in S_I$ the entry $a_n$ denotes the number of occurrences of $n$ in $\widetilde{S_I}$.
\label{def:intapod}
\end{definition}
% As later shown, the convolutional beamformer has an intrinsic apodization determined by the weight function.

To derive an expression for the beam pattern of the convolutional beamformer we assume the input signal to be $f(t)=e^{j\omega_0t}$ impinging on the array at direction $\theta$, as in Section \ref{sec:model}. Consequently, we obtain 
\begin{equation}
u_n(t)=e^{j\omega_0t}e^{-j\omega_0\tau_n},
\label{eq:bpun}
\end{equation} 
where $\tau_n$ is given by (\ref{eq:newfn}). Substituting (\ref{eq:bpun}) into (\ref{eq:products}) we have
\begin{align}
\begin{split}
\bar{y}(t)&= \sum_{n=-(N-1)}^{N-1} \sum_{m=-(N-1)}^{N-1} 
e^{j2\omega_0t}e^{-j\omega_0(\tau_n+\tau_m)}\\
&=e^{j2\omega_0t} \sum_{n,m=-(N-1)}^{N-1} e^{-j\omega_0(\tau_n+\tau_m)}.
\end{split}
\end{align} 
Following band-pass filtering we get
\begin{equation}
y_\text{\tiny COBA}(t)=\left(h_\text{BP}(t)\underset{t}{\ast} e^{j2\omega_0t}\right) 
\sum_{n,m=-(N-1)}^{N-1} e^{-j\omega_0(\tau_n+\tau_m)}.
\end{equation}
In the Fourier domain
\begin{equation}
Y_\text{\tiny COBA}(\omega)=\delta(\omega-2\omega_0)H_\text{BP}(2\omega_0) \sum_{n,m=-(N-1)}^{(N-1)} e^{-j\omega_0(\tau_n+\tau_m)},
\end{equation}
where $H_\text{BP}(\omega)$ is the Fourier transform of the band-pass filter $h_\text{BP}(t)$. Assuming that $H_\text{BP}(2\omega_0)=1$, the beam pattern generated by COBA is
\begin{align}
\begin{split}
H_\text{\tiny COBA}(\theta)&=\sum_{n,m=-(N-1)}^{N-1}e^{-j\omega_0(\tau_n+\tau_m)} \\
&= \sum_{n,m=-(N-1)}^{N-1} e^{-j\omega_0\frac{d\sin\theta}{c}(n+m)},
\label{eq:hpp}
\end{split}
\end{align}
where the last equation is obtained by substituting the explicit expression for $\tau_n$.
% The sum co-array appears naturally in the terms of $n+m$ (\ref{eq:hpp}).

The sum in (\ref{eq:hpp}) is the product of two polynomials $H_\text{\tiny COBA}(\theta)=H_\text{\tiny DAS}(\theta)H_\text{\tiny DAS}(\theta)$ with $H_\text{\tiny DAS}(\theta)$ given by (\ref{eq:das}) assuming $w_n=1$. In Appendix \ref{app:polyprod} we show that $H_\text{\tiny COBA}(\theta)$ can be written as a single polynomial
\begin{equation}
H_\text{\tiny  COBA}(\theta) = \sum_{n=-2(N-1)}^{2(N-1)} a_ne^{-j\omega_0\frac{d\sin\theta}{c}n},
\label{eq:bpcoba}
\end{equation}
where $\{a_n\}$ are triangle-shaped intrinsic apodization weights given by (\ref{eq:intrinsicapodization}). This apodization is illustrated in Fig. \ref{fig:apod}(b) and is further discussed in Section \ref{subsec:apod}.

Equation (\ref{eq:bpcoba}) can be thought of as the beam pattern of a DAS beamformer operating on the sum co-array. This virtual array is twice the size of the physical one, leading to a resolution. 
that is twice better the standard resolution.
In addition, the apodization of the sum co-array reduces the side lobes, thus, the convolutional beamformer results in improved image contrast, as demonstrated in Fig. \ref{fig:beampatterncoba}.    

\begin{figure}[h]
 \centering
 \includegraphics[trim={2cm 3cm 2cm 3cm},clip,height = 4cm, width = 0.9\linewidth]{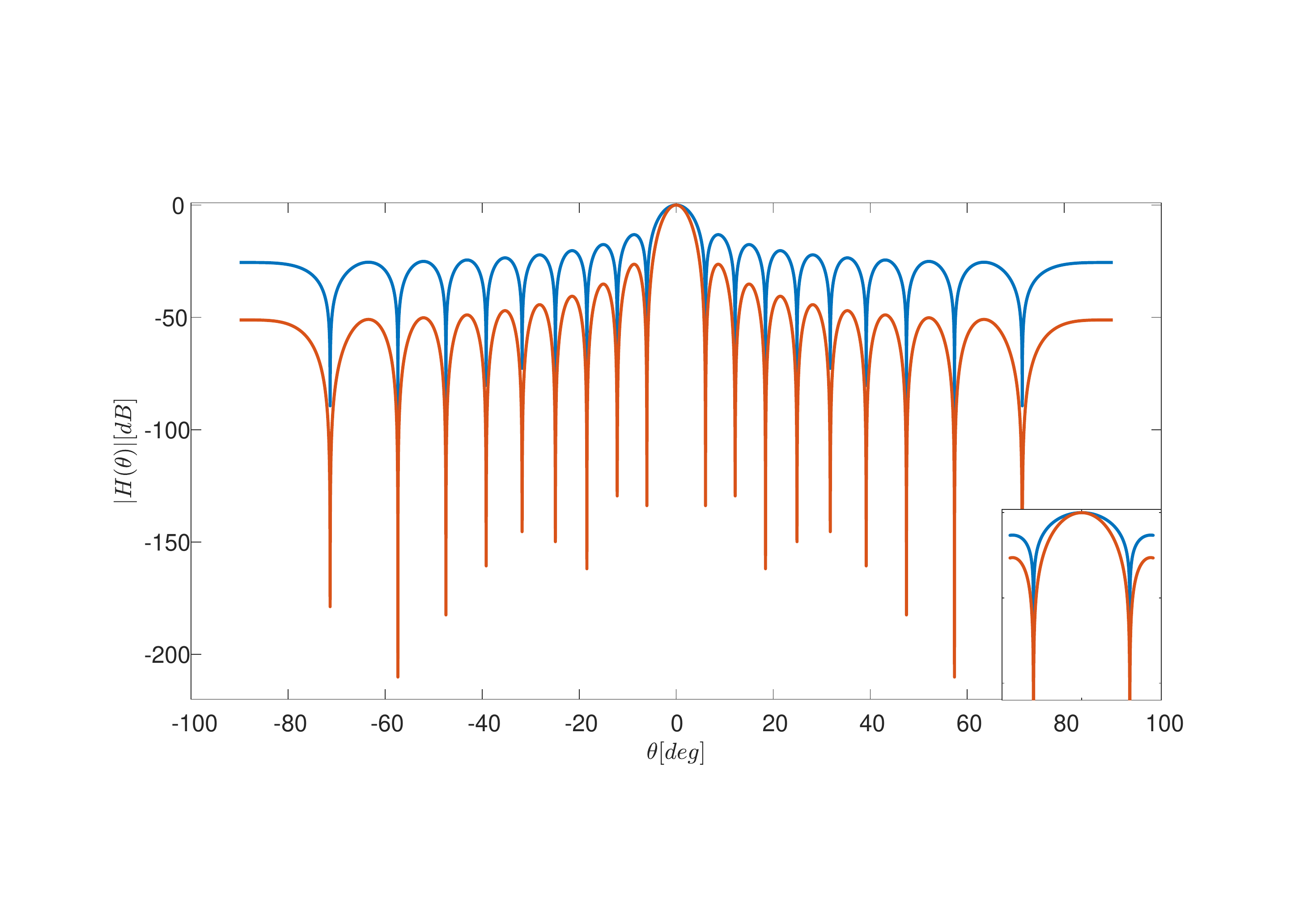}
 \caption{Beam pattern magnitude of DAS (blue) and COBA (red) for $N = 10,\,w_n=1,\lambda=1,d=\frac{1}{2}$. Right bottom corner - zoom in on the main lobes.}
  \label{fig:beampatterncoba}
 \end{figure}

\section{Sparse Convolutional Beamforming}
\label{sec:sparseconvbf}

So far, we presented a convolutional beamformer which leads to better resolution and contrast with respect to DAS. An analysis of its beam pattern showed that its performance depends on the sum co-array, rather than the physical array. 
In this section, we exploit this property to derive two families of beamformers that rely on a reduced number of elements, without affecting the beam pattern.   

\subsection{Sparse Arrays}
Given a ULA of $2N-1$ elements with position set $I=\{-(N-1),...,(N-1)\}$, we aim to remove some of its elements to create a thinned array. The challenge is to design such an array without degrading image quality. To this end, we define the following.
\theoremstyle{definition}
\begin{definition}{Sparse Array:}
Consider a ULA with position set $I$.
A \textit{sparse array} with respect to $I$ is a thinned array, created by removing part of the elements, which satisfies
\begin{equation}
J\subset I \subseteq S_J,
\label{eq:J<I<SJ}
\end{equation}  
where $J$ and $S_J$ are integer sets indicating the elements positions of the thinned array and of its sum co-array respectively. 
\label{def:sa}
\end{definition}  

A sparse array $J$ according to the definition above must be a strict sub-array of $I$, i.e., the number of elements in $J$ is strictly smaller than that of $I$. In addition, performing convolutional beamforming using $J$ is equivalent to applying DAS beamforming on the sum co-array $S_J$ which by Definition \ref{def:sa} has an aperture at least as large as the original ULA $I$. Thus, it results in a beam pattern which is equal or better in resolution than the beam pattern generated by a DAS beamformer applied to $I$.

% Given a ULA, finding a sparse array with fewest elements is a hard combinatorial problem. Here, we provide a simple closed form sparse array design which leads to significant element reduction. In this subsection, our goal is to design sparse arrays which attain a resolution similar to that of a standard DAS. In following subsection, we suggest an array design for which $S_J=S_I$, leading to twice the standard resolution.    

\subsection{Sparse Beamforming}
Here, we provide a simple closed form sparse array design which leads to a large element reduction.

Assume $N$ is not prime, so that it can be factored as $N=AB$ where $A,B\in\mathbb{N}^+$. Given such a decomposition we define the following array:
\begin{align}
\begin{split}
&U_A=\{-(A-1),...\,,0,...\,,A-1\}, \\ 
&U_B=\{nA:\quad n=-(B-1),...\,,0,...\,,B-1\}.
\label{eq:sumnested}
\end{split}
\end{align}
An illustration of this array for $N=9, A=3$ and $B=3$ is seen in Fig. \ref{fig:arraydesign}.

Let  $U_A+U_B=\{n+m:\, n\in U_A,\,m\in U_B\}$. Then
\begin{align}
\begin{split}
U_A+U_B&=\{-(AB-1),...\,,0,...\,,AB-1\} \\
&=\{-(N-1),...\,,0,...\,,N-1\} \\
&=I.
\end{split}
\end{align}
Thus, denoting by $U\subset I$ the array geometry defined as
\begin{equation}
U=U_A\cup U_B,
\label{eq:scobaset}
\end{equation}
it holds that 
\begin{equation}
I\subset S_U
\label{eq:ISU}
\end{equation}
where $S_U$ is the sumset of $U$. Thus, the family of sets (\ref{eq:scobaset}) satisfy (\ref{eq:J<I<SJ}) where the number of elements in each set is $2A+2B-3$. As an example, for $N=9$ and $A=3,B=3$ the set $U$ has only $9$ elements out of $17$ in the full array, as shown in Fig.\,\,\ref{fig:arraydesign}. We note that the proposed sparse arrays are similar to nested arrays \cite{pal2010nested} used in the array processing literature. However, while nested arrays are related to the difference co-array, the sets (\ref{eq:sumnested}) are synthesized from the sum co-array perspective \cite{cohen2018optimized} and have a smaller physical aperture.

Based on $U$, we propose a Sparse Convolutional Beamforming Algorithm (SCOBA) which computes the following signal
\begin{equation}
\bar{y}_\text{\tiny SCOBA}(t)= \sum_{n\in U}\sum_{m\in U} u_n(t)u_m(t),
\label{eq:scobaproducts}
\end{equation}
where $u_n(t)$ is defined in (\ref{eq:un}). Namely, we perform COBA only on the outputs of the elements in $U$. As before, (\ref{eq:scobaproducts}) can be written using the sum co-array $S_U$
\begin{equation}
\bar{y}_\text{\tiny SCOBA}(t)= \sum_{n\in S_U} s_n(t),
\label{eq:scobaconv}
\end{equation}
where
\begin{equation}
s_n(t)=\sum_{(i,j\in U: i+j=n)} u_i(t)u_j(t).
\label{eq:socbasn}
\end{equation}
The final output of SCOBA is given by
\begin{equation}
y_\text{\tiny SCOBA}(t) = h_\text{BP}(t)\ast\bar{y}_\text{\tiny SCOBA}(t). 
\label{eq:scoba}
\end{equation}
% where
% \begin{equation*}
% \tilde{b}= \sum_{n\in S_U} \hat{s}_n,\quad \tilde{s}_n=\sum_{\substack{i+j=n \\ i,j\in U}} y_iy_j.
% \label{eq:sscoba}
% % \tilde{b}= \sum_{n\in U} \sum_{m\in U} y_ny_m.
% \end{equation*}
Computing (\ref{eq:scobaconv}) can be performed using appropriate zero-padding and FFT in $\mathcal{O}(N\log N)$ operations or directly by pair-wise products with complexity $\mathcal{O}\left((A+B)^2\right)$ which may be lower. The proposed beamformer is summarized in Algorithm\,\,\ref{alg:scoba}.

\begin{algorithm}
\caption{\small {\bf S}parse {\bf COBA} (SCOBA)}
\label{alg:scoba}
{\fontsize{9}{15}\selectfont
\begin{algorithmic} 
\REQUIRE Delayed RF signals $\{y_n\}$, weights $\{w_n\}$, parameters $A,B$. 
\STATE {\bf 1:} Construct the set $U$ using (\ref{eq:scobaset}) and its sumset $S_U$. 
\STATE {\bf 2:} Compute $u_n(t)=\exp\{j\phase{y_n(t)}\}\sqrt{|y_n(t)|},\,  n\in U.$ 
\STATE {\bf 3:} Calculate ${\bf a}=\mathbbm{1}_U\ast \mathbbm{1}_U$.
 % using Definition \ref{def:intapod}. 
\STATE {\bf 4:} Set weights $\tilde{w}_n=\frac{w_n}{a_n},\, n\in S_U.$ 
\STATE {\bf 5:} For all $n\in S_U$ compute $s_n(t)$ using (\ref{eq:socbasn}).
\STATE {\bf 6:} Evaluate the weighted sum
\begin{equation*}
\bar{y}(t)= \sum_{n\in S_U} \tilde{w}_ns_n(t).
\end{equation*}
\STATE {\bf 7:}  Apply a band-pass filter 
\begin{equation*}
y_\text{\tiny SCOBA}(t) = h_\text{BP}(t)\underset{t}{\ast} \bar{y}(t). 
\end{equation*}
\ENSURE Beamformed signal $y_\text{\tiny SCOBA}(t)$.  
\end{algorithmic}}
\end{algorithm}

To analyze the beam pattern produced by SCOBA, we follow the same steps presented in Section \ref{sec:convbf}. This leads to
\begin{align}
\begin{split}
H_\text{\tiny SCOBA}(\theta)&= \sum_{n,m\in U} e^{-j\omega_0\frac{d\sin\theta}{c}(n+m)} \\
&= \sum_{n\in S_U} u_n e^{-j\omega_0\frac{d\sin\theta}{c}n},
\label{eq:scobabp}
\end{split}
\end{align}
where ${\bf u}=\mathbbm{1}_U\ast \mathbbm{1}_U$ is the intrinsic apodization of SCOBA. Notice that (\ref{eq:scobabp}) can be rewritten as
 \begin{equation}
 H_\text{\tiny SCOBA}(\theta)= \sum_{n\in I} u_n e^{-j\omega_0\frac{d\sin\theta}{c}n} +
 \sum_{m\in S_U/I} u_m e^{-j\omega_0\frac{d\sin\theta}{c}m},
 \label{eq:scobabp2}
 \end{equation}
where $S_U/I=\{m\in S_U:\,m\notin I\}$.
The first sum in the equation above ensures the resolution of SCOBA is at least as good as the resolution of a DAS beamformer applied on the full array $I$. The second sum, on the right hand side of (\ref{eq:scobabp2}), provides additional degrees of freedom that may be used to improve the resolution. The image contrast depends on the apodization $\{u_n\}$ which can be adjusted as we describe later in Section \ref{subsec:apod}. A demonstration of the beam pattern of SCOBA is presented in Fig. \ref{fig:beampatternsparse}.

The number of elements required for SCOBA is $2A+2B-3$, thus, it leads to a family of beamformers in which each beamformer demonstrates a different level of element reduction, controlled by the parameters $A,B$. While a large number of elements may be favorable in the presence of noise, it also increases the mutual coupling \cite{merrill2001introduction,liu2016superconf,liu2016super,liu2016high}, which is the electromagnetic interaction between adjacent sensors that has an adverse effect on obtaining a desired beam pattern. In Section \ref{subsec:opt} we discuss how to minimize the number of sensors using this approach.

\begin{figure*}[h]
 \centering
 \includegraphics[trim={3.5cm 4cm 2cm 4cm},clip,height = 9cm, width = 1\linewidth]{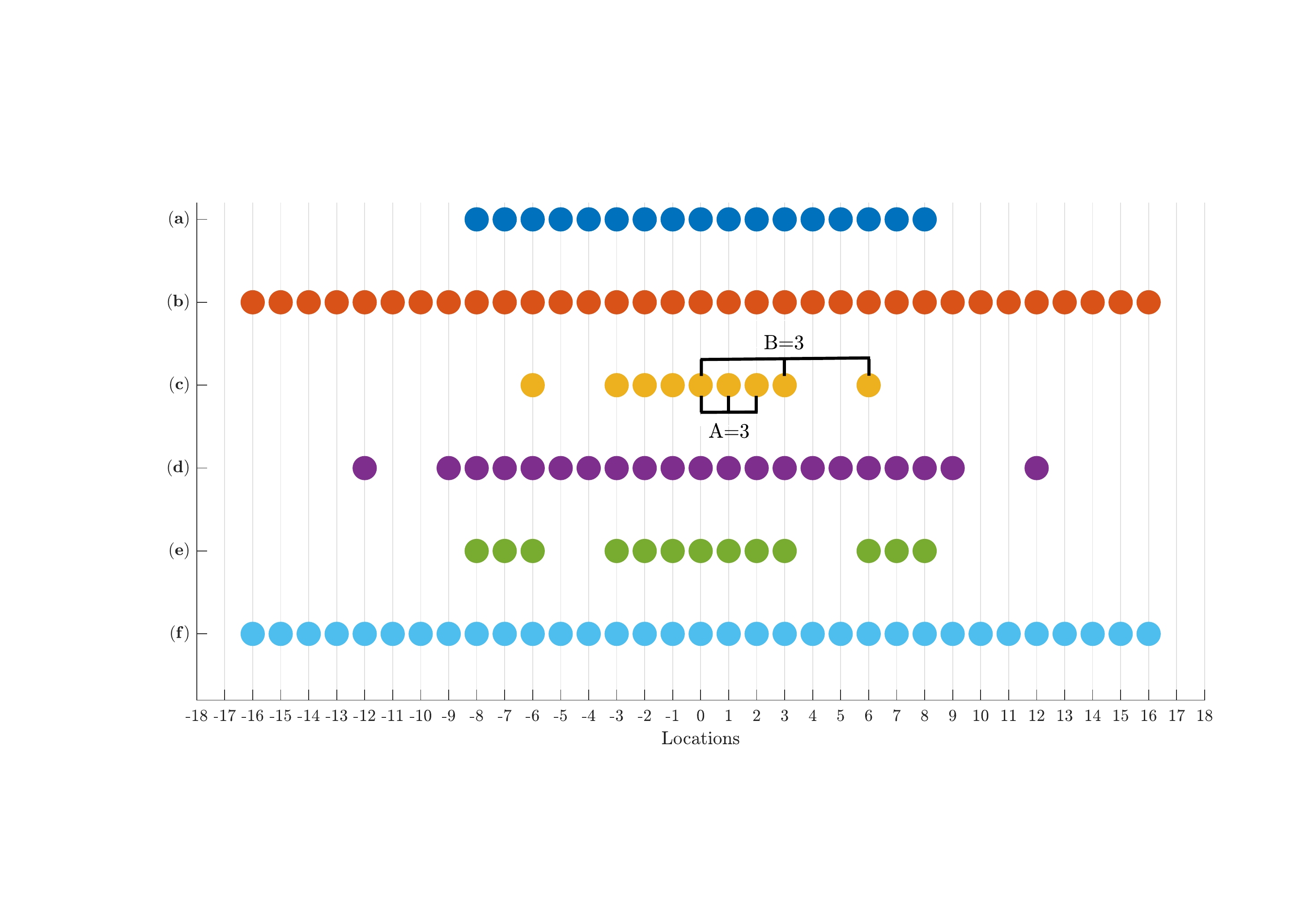}
 \caption{Element positions of (a) ULA $I=[-8,8]$, (b) sum co-array $S_I=[-16,16]$, (c) sparse array $U$ given by (\ref{eq:scobaset}), (d) sum co-array $S_U$, (e) sparse array $V$ defined by (\ref{eq:scobarset}), (f) sum co-array $S_V$. In this example, the element spacing is $d=1$, $N=9$ and $A=3, B=3$.}
  \label{fig:arraydesign}
 \end{figure*}

 \begin{figure}[h]
 \centering
 \includegraphics[trim={2cm 3cm 2cm 3cm},clip,height = 4cm, width = 0.9\linewidth]{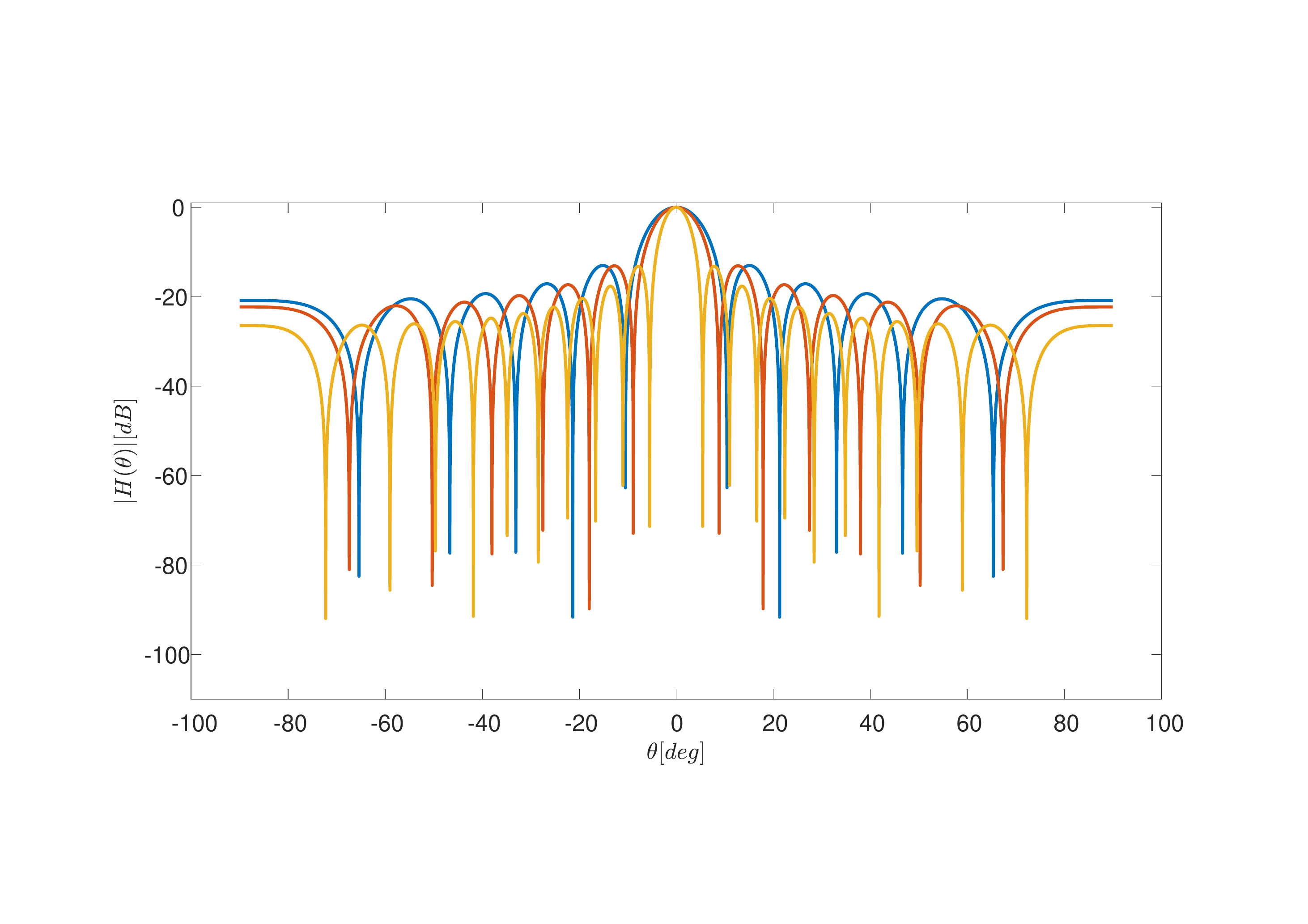}
 \caption{Beam pattern of DAS (blue), SCOBA (red) and SCOBAR (yellow) for $N = 6$ and $A=3, B=2,\lambda=1,d=\frac{1}{2}$.}
  \label{fig:beampatternsparse}
 \end{figure}  

\subsection{Sparse Beamforming with Super Resolution}
Previously we presented COBA which achieves double the standard resolution. Following that, we introduced a sparse array design to create a beamformer which uses fewer elements and yields a resolution that is comparable to the standard one. Now, we propose a family of sparse beamformers with enhanced resolution that is equivalent to that of COBA, thereby combining the best of both worlds. We refer to this technique as Sparse Convolutional Beamforming Algorithm with super-Resolution (SCOBAR).

We extend the array configuration used in SCOBA by constructing an additional array as
\begin{equation}
U_C =\{n:\quad |n|=N-A,...,N-1\}.
% =\{A(B-1),...,N-1\}
\end{equation}
Then, we define a sparse array geometry given by
\begin{equation}
V=U_A\cup U_B\cup U_C.
\label{eq:scobarset}
\end{equation}
As shown in Fig. \ref{fig:arraydesign}, we obtain $V$ by adding to $U$ two small ULAs of size $A-1$ at its edges. It can be verified that
\begin{equation}
V\subset I\subset S_V=S_I,
\label{eq:SVSI}
\end{equation} 
i.e. the sum co-array of $V$ is equal to the sum co-array of the full array $I$.
SCOBAR uses the array sensors given by $V$ to compute the signal
\begin{align}
\begin{split}
\bar{y}_\text{\tiny SCOBAR}(t)&= \sum_{n\in V}\sum_{m\in V} u_n(t)u_m(t) \\
&=\sum_{n\in S_V} s_n(t)
\label{eq:scobarconv}
\end{split}
\end{align}
where $u_n(t)$ is defined in (\ref{eq:un}) and
\begin{equation}
s_n(t)=\sum_{(i,j\in V: i+j=n)} u_i(t)u_j(t).
\label{eq:socbarsn}
\end{equation}
The final output of SCOBAR is given by
\begin{equation}
y_\text{\tiny SCOBAR}(t) = h_\text{BP}(t)\ast\bar{y}_\text{\tiny SCOBAR}(t). 
\label{eq:scobar}
\end{equation}

Similar to SCOBA, (\ref{eq:scobarconv}) can be calculated using FFT in $\mathcal{O}(N\log N)$ operations or directly in $\mathcal{O}\left((A+B)^2\right)$. A summary of SOCBAR is provided in Algorithm \ref{alg:scobar}.

\begin{algorithm}
\caption{\small {\bf SCOBA} with Super-{\bf R}esolution (SCOBAR)}
\label{alg:scobar}
{\fontsize{9}{15}\selectfont
\begin{algorithmic} 
\REQUIRE Delayed RF signals $\{y_n\}$, weights $\{w_n\}$, parameters $A,B$. 
\STATE {\bf 1:} Construct the set $V$ using (\ref{eq:scobarset}) and its sumset $S_V$. 
\STATE {\bf 2:} Compute $u_n(t)=\exp\{j\phase{y_n(t)}\}\sqrt{|y_n(t)|},\,  n\in V.$ 
\STATE {\bf 3:} Calculate ${\bf a}=\mathbbm{1}_V\ast \mathbbm{1}_V$.
 % using Definition \ref{def:intapod}. 
\STATE {\bf 4:} Set weights $\tilde{w}_n=\frac{w_n}{a_n},\, n\in S_V.$ 
\STATE {\bf 5:} For all $n\in S_V$ compute $s_n(t)$ using (\ref{eq:socbarsn}).
\STATE {\bf 6:} Evaluate the weighted sum
\begin{equation*}
\bar{y}_\text{\tiny SCOBAR}(t)= \sum_{n\in S_V} \tilde{w}_ns_n(t).
\end{equation*}
\STATE {\bf 7:}  Apply band-pass filter 
\begin{equation*}
y_\text{SCOBAR}(t) = h_\text{BP}(t)\underset{t}{\ast} \bar{y}_\text{\tiny SCOBAR}(t). 
\end{equation*}
\ENSURE Beamformed signal $y_\text{SCOBAR}(t)$.  
\end{algorithmic}}
\end{algorithm}

Following similar arguments as for COBA and SCOBA, the beam pattern of SCOBAR is 
\begin{align}
\begin{split}
H_\text{\tiny SCOBAR}(\theta)&= \sum_{n,m\in V} e^{-j\omega_0\frac{d\sin\theta}{c}(n+m)} \\
&= \sum_{n\in S_V} v_n e^{-j\omega_0\frac{d\sin\theta}{c}n} \\
&= \sum_{n=-2(N-1)}^{2(N-1)} v_n e^{-j\omega_0\frac{d\sin\theta}{c}n}
\label{eq:scobarbp}
\end{split}
\end{align}
where ${\bf v}=\mathbbm{1}_V\ast \mathbbm{1}_V$ and the last equality is due the fact that $S_V=S_I=\{-2(N-1),...,2(N-1)\}$. The latter implies that the lateral resolution of SCOBAR is similar to that of COBA, twofold better than the standard one, as shown in Fig. \ref{fig:beampatternsparse}. The weights $\{v_n\}$ can be modified to control the image contrast as described in the next subsection. 

The number of elements required by SCOBAR is $4A+2B-5$, thus, the improved resolution is at the expense of a smaller element reduction in comparison with SCOBA. Optimization of the parameters $A$ and $B$ is presented in Section \ref{subsec:opt}.

% \begin{figure*}[h]
%  \centering
%  \includegraphics[trim={3cm 0cm 3cm 0cm},clip, width = 1\linewidth]{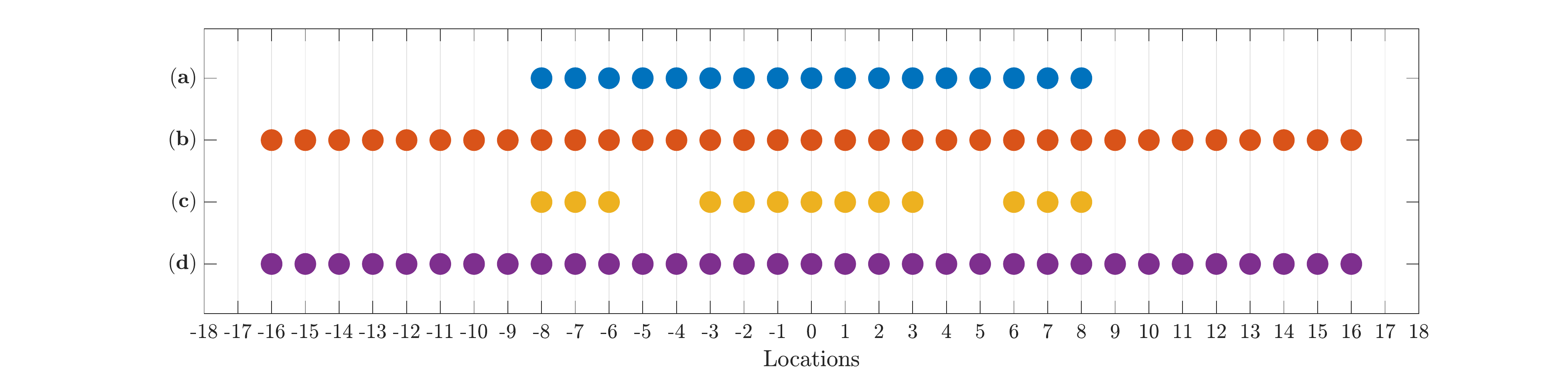}
%  \caption{Element locations of (a) full ULA $I=[-8,8]$, (b) sum co-array of $I$, (c) sparse array $V$ defined in (\ref{eq:scobarset}), (d) sum co-array of $V$. In these examples, $d_x=1$ is the element spacing and $A=B=3$.}
%   \label{fig:arraydesign2}
%  \end{figure*}

\subsection{Apodization}
\label{subsec:apod}
As stated in Section \ref{sec:model}, the image contrast is governed by the peak side lobe level. Amplitude apodization \cite{usbook} is an important tool used for suppressing side lobes, leading to improved contrast. Typical apodization functions include Hanning, Hamming, or Gaussian amplitude weighting of the elements which lowers the side lobes at the expense of widening the main lobe width, i.e., worsening the lateral resolution. Therefore, there is a tradeoff between lateral resolution and contrast and a judicious choice of the apodization must be made based on the clinical application.

In the previous section, we introduced the concept of intrinsic apodization which arises in the proposed beamformers. This concept is extended to a standard DAS beamformer by assuming the array has intrinsic weights of unity. Figure\,\ref{fig:apod} shows the intrinsic apodization of the different beamformers. For a DAS beamformer the intrinsic apodization function is constant over the elements and equal to 1, whereas, for COBA we get a triangle-shaped apodization which suppresses side lobes. The intrinsic apodization functions of SCOBA and SCOBAR depend on the parameters $A$ and $B$, and should be analyzed to avoid unwanted side lobes. 

To address this issue, we propose a simple adjustment to the apodization function which takes into account the intrinsic apodization. Given a desired apodization function with weights $\{w_n\}$, we define modified apodization function
\begin{equation}
\tilde{w}_n = \frac{w_n}{a_n},
\label{eq:weights}
\end{equation}
where $\{a_n\}$ are the intrinsic weights assumed to be non-zero. Then, we apply these weights on the sum co-array by computing the weighted sum
\begin{equation}
\bar{y}(t)=\sum_n \tilde{w}_n s_n(t),
\end{equation}
prior to band-pass filtering.
This ensures that the resulting beam pattern will have weights equal to $\{w_n\}$ as desired. 

The intrinsic apodization functions of COBA and SCOBAR both have only non-zeros, thus, any apodization can be achieved using (\ref{eq:weights}). In the case of SCOBA, the intrinsic apodization has zeros, leading to discontinuities in the beam pattern which may be considered as a drawback at first glance. However, note that this is expected since the intrinsic apodization of SCOBA is designed to have non-zeros in the range $-[N-1,N-1]$, similar to DAS beamformer. Thus, any apodization function obtained by DAS can be attained by SCOBA. In fact, the intrinsic apodization of SCOBA has more degrees of freedom (non-zeros) than DAS as shown in (\ref{eq:scobabp2}), allowing the use of an extended family of apodization functions.        

\begin{figure*}[h]
 \centering
 \includegraphics[trim={3cm 4cm 3cm 4cm},clip, height = 7cm, width = 0.8\linewidth]{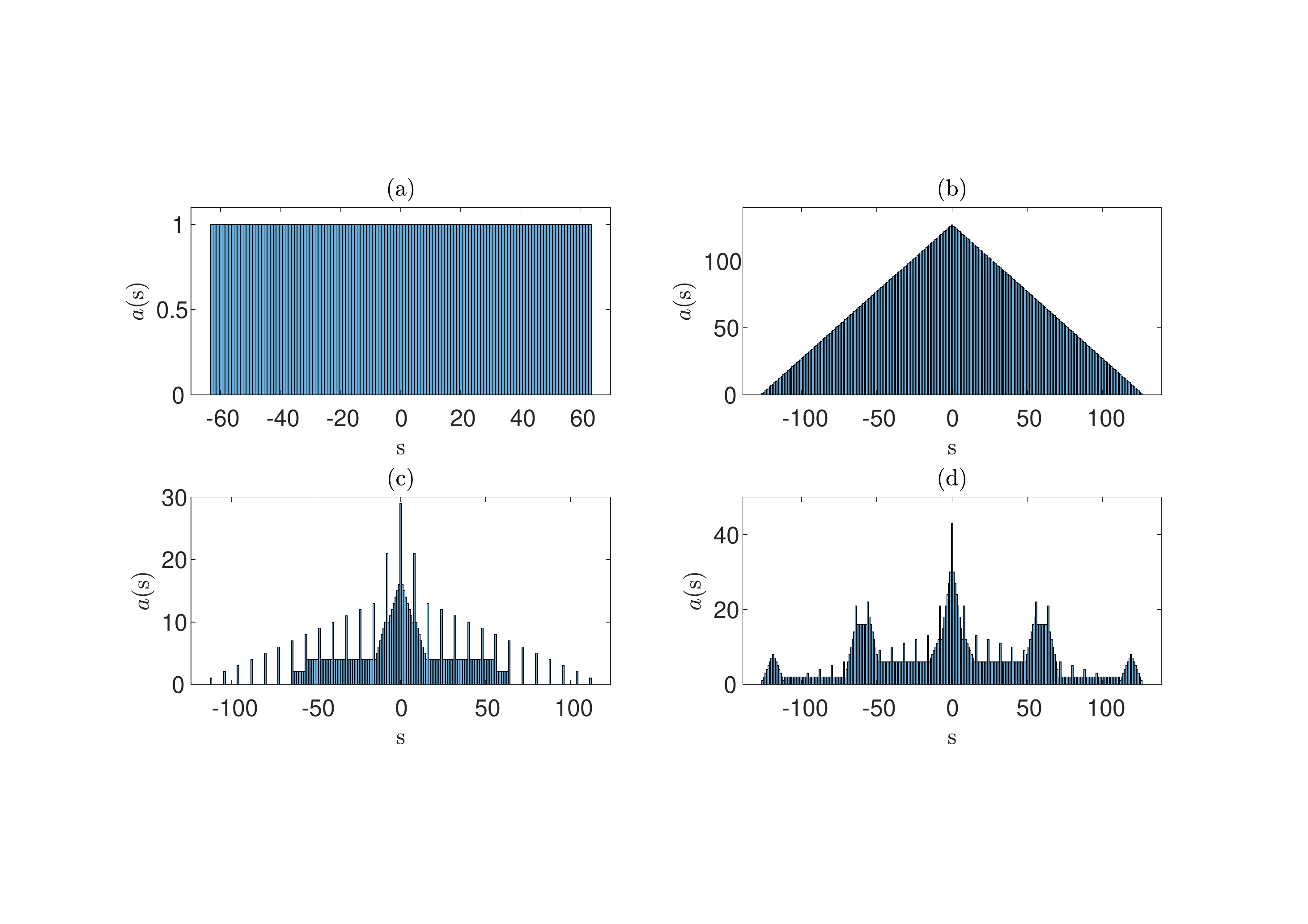}
 \caption{Intrinsic apodization of (a) DAS with 127 elements, (b) COBA with 127 elements, (c) SCOBA with $A=B=8$ and (d) SCOBAR with $A=B=8$.}
  \label{fig:apod}
 \end{figure*}

\subsection{Minimal Number of Elements}
\label{subsec:opt}
As noted before, the number of elements used by SCOBA and SCOBAR is controlled by the parameters $A$ and $B$. We next derive expressions for $A$ and $B$, leading to a minimal number of sensors required by the proposed beamformers. 

For SCOBA, minimizing the number of elements can be cast as the following optimization problem:
\begin{align}
\begin{split}
A^*,B^* =\underset{A,B\in\mathbb{N}}{\arg\min}\quad &2(A+B)-3 \\
\text{subject to}\quad &AB=N.
\label{eq:scobaopt}
\end{split}
\end{align}
When $N$ is a prime number, there are only two feasible solutions which are optimal given by $A=N,\,B=1$ and vice versa; both result in a fully populated array.
Hence, we consider below the case where $N$ is not prime and (\ref{eq:scobaopt}) becomes a combinatorial optimization problem. A closed form solution is presented in the following theorem.

\begin{theorem}
Given an arbitrary $N\in\mathbb{N}^+$, define the sets
\begin{align}
\begin{split}
&D_1 = \big\{m\in\mathbb{N}:\, m|N,\,m\leq\sqrt{N}\big\}, \\
&D_2 = \big\{m\in\mathbb{N}:\, m|N,\,m\geq\sqrt{N}\big\},
\end{split}
\end{align}   
where $m|N$ indicates that $m$ is a divisor of $N$.
The optimal solutions for (\ref{eq:scobaopt}) are
\begin{align}
\begin{split}
&A^*=\max(D_1)\text{ and } B^*=\min(D_2), \\
&A^*=\min(D_2)\text{ and } B^*=\max(D_1).
\end{split}
\end{align}
When $N$ is a perfect square, we have a single optimal solution
\begin{equation}
A^*=B^*=\sqrt{N}.
\end{equation}
\label{theo:scobamin}
\end{theorem}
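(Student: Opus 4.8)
The plan is to first strip away the affine transformation: since $2(A+B)-3$ is a strictly increasing function of $A+B$, minimizing it over the feasible set $\{(A,B)\in\mathbb{N}^2:\,AB=N\}$ is equivalent to minimizing $A+B$ subject to $AB=N$. Substituting $B=N/A$, I would then study the single-variable function $f(A)=A+N/A$ over the real domain $A>0$. A short derivative computation (or the AM--GM inequality $A+N/A\ge 2\sqrt{N}$) shows that $f$ is strictly convex, strictly decreasing on $(0,\sqrt{N})$, strictly increasing on $(\sqrt{N},\infty)$, and attains its global minimum at $A=\sqrt{N}$. The upshot is that among admissible integer factors, the sum $A+N/A$ is smallest precisely when the factor is chosen as close to $\sqrt{N}$ as possible.

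Next I would exploit the symmetry of factorizations: for every divisor pair $\{A,B\}$ with $AB=N$, exactly one factor lies in the range $(0,\sqrt{N}]$ and the other in $[\sqrt{N},\infty)$, and they coincide only when $N$ is a perfect square. Since $A+B$ is symmetric in its arguments, it suffices to restrict the search to $A\in D_1$. The monotonicity of $f$ on $(0,\sqrt{N}]$ then forces the optimal choice to be $A^*=\max(D_1)$, the largest divisor not exceeding $\sqrt{N}$, because moving $A$ toward $\sqrt{N}$ can only decrease $f$.

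The crucial step---and the one I expect to be the main obstacle---is to verify that this choice pairs correctly, namely that $N/\max(D_1)=\min(D_2)$, so that the optimal second coordinate is genuinely the smallest divisor at least $\sqrt{N}$. I would argue this complementarity by contradiction. If $A^*=\max(D_1)$, then $A^*\le\sqrt{N}$ gives $N/A^*\ge\sqrt{N}$, so $N/A^*\in D_2$. Suppose some $m\in D_2$ satisfied $m<N/A^*$. Then $N/m$ would be a divisor of $N$ strictly larger than $A^*$, and since $m\ge\sqrt{N}$ forces $N/m\le\sqrt{N}$, we would have $N/m\in D_1$ with $N/m>A^*=\max(D_1)$, a contradiction. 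This establishes $\max(D_1)\cdot\min(D_2)=N$ and simultaneously pins down both optimal coordinates.

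Finally I would assemble the conclusion. The two listed solutions $(A^*,B^*)=(\max(D_1),\min(D_2))$ and $(A^*,B^*)=(\min(D_2),\max(D_1))$ are both optimal because the objective and constraint are invariant under swapping $A\leftrightarrow B$. When $N$ is a perfect square, $\sqrt{N}$ is itself a divisor belonging to $D_1\cap D_2$, so that $\max(D_1)=\min(D_2)=\sqrt{N}$; the two solutions then collapse into the single minimizer $A^*=B^*=\sqrt{N}$, which is exactly the point where the AM--GM bound is attained with equality.
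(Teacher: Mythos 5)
Your proposal is correct and follows essentially the same route as the paper: reduce to minimizing $A+B$ subject to $AB=N$, restrict by symmetry to the factor in $(0,\sqrt{N}]$, and use the monotonic decrease of $x+N/x$ on that interval to conclude $A^*=\max(D_1)$. The only addition is your explicit verification that $N/\max(D_1)=\min(D_2)$, a small rigor gap the paper papers over with ``accordingly,'' but this does not change the argument.
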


\begin{proof}
Here we prove the case where $N$ is a perfect square. The complete proof is given in Appendix \ref{app:scobaproof}.

First, we write an equivalent problem to (\ref{eq:scobaopt}) as
\begin{align}
\begin{split}
A^*,B^* =\underset{A,B\in\mathbb{N}}{\arg\min}\quad &A+B \\
\text{subject to}\quad &AB=N. 
\label{eq:scobaopt2}
\end{split}
\end{align}
From the inequality of the arithmetic and geometric means we have that
\begin{equation}
\sqrt{AB}\leq\frac{A+B}{2}.
\label{eq:amgm}
\end{equation}
Thus, the objective function is lower bounded by $2\sqrt{N}$. Equality is obtained in (\ref{eq:amgm}) if and only if $A=B=\sqrt{N}$. Thus, this choice
attains the lower bound and is optimal, which concludes the proof.
\end{proof}

Theorem \ref{theo:scobamin} implies that the minimal number of elements required by SCOBA is proportional to $\sqrt{N}$ and the beamformed signal given by (\ref{eq:scobar}) can be computed with a low complexity of $\mathcal{O}(N)$.

As for SCOBAR, note that when $B=1$, $U_C=U_A$ and $A=N$, leading to the trivial case where the array is full, hence, we assume that $B>1$. In this case, the minimal number of elements required by SCOBAR is given by the solution to
\begin{align}
\begin{split}
A^*,B^* =\underset{A,B\in\mathbb{N},\,B>1}{\arg\min}\quad &2(2A+B)-5 \\
\text{subject to}\quad &AB=N.
\label{eq:scobaropt}
\end{split}
\end{align}
When $N$ is prime, the only feasible and hence optimal solution is $A=1$ and $B=N$ which is trivial. Therefore, we address below the case where $N$ is not prime. A closed form solution is obtained in the next theorem. 

\begin{theorem}
Given an arbitrary $N\in\mathbb{N}^+$, define the sets 
\begin{align}
\begin{split}
&D_{3} = \big\{m\in\mathbb{N}:\, m|2N,\,m\leq\sqrt{2N}\big\}, \\
&D_{4} = \big\{m\in\mathbb{N}:\, m|2N,\,m\geq\sqrt{2N}\big\}. 
\end{split}
\end{align}
Denote by $\mathbb{E}$ the set of even integers. The optimal solutions for (\ref{eq:scobaropt}) are given by the following cases:
\begin{enumerate}[label=(\roman*)]
\item $\max(D_3)\in\mathbb{E},\,\min(D_4)\in\mathbb{E}$
\begin{align}
\begin{split}
&A^*=\max(D_3)/2\text{ and } B^*=\min(D_4), \\
&A^*=\min(D_4)/2\text{ and } B^*=\max(D_3).
\end{split}
\end{align}
\item $\max(D_3)\in\mathbb{E},\,\min(D_4)\notin \mathbb{E}$
\begin{equation}
A^*=\max(D_3)/2\text{ and } B^*=\min(D_4).
\end{equation}
\item $\max(D_3)\notin\mathbb{E},\,\min(D_4)\in \mathbb{E}$
\begin{equation}
A^*=\min(D_4)/2\text{ and } B^*=\max(D_3).
\end{equation}
\end{enumerate}
When $2N$ is a perfect square,  there is a single solution
\begin{equation}
A^*=\frac{\sqrt{2N}}{2},\,B^*=\sqrt{2N}.
\end{equation}
\label{theo:scobarmin}
 \end{theorem}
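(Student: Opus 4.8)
The plan is to reduce the constrained integer program (\ref{eq:scobaropt}) to the same divisor-pair form used in Theorem~\ref{theo:scobamin}, with the single new twist that one of the two factors must be even. First I would discard the constant $-5$ and the harmless overall factor of $2$ in the objective $2(2A+B)-5$, so that minimizing it is equivalent to minimizing $2A+B$ subject to $AB=N$, $B>1$. The natural substitution is $C=2A$: since $A\in\mathbb{N}$, the new variable $C$ ranges exactly over the \emph{even} positive integers, the constraint $AB=N$ becomes $CB=2N$, and the objective becomes $C+B$. Thus the problem reads: among all factorizations $CB=2N$ into positive integers with $C$ even and $B>1$, minimize $C+B$. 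This is the form I want, because it is identical to the SCOBA problem of Theorem~\ref{theo:scobamin} applied to $2N$ rather than $N$, except for the evenness restriction on one factor.

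Next I would invoke the same fact underlying Theorem~\ref{theo:scobamin}: over all factor pairs of a fixed integer $M$, the sum of the pair is minimized by the two complementary divisors closest to $\sqrt{M}$. With $M=2N$ the unconstrained minimal sum is $d_1+d_2$, where $d_1=\max(D_3)$ and $d_2=\min(D_4)$ are complementary divisors ($d_1 d_2=2N$) obeying $d_1\le\sqrt{2N}\le d_2$. I would justify this through the convexity of $g(x)=x+2N/x$ on the positive reals, which is decreasing then increasing with minimum at $x=\sqrt{2N}$; restricting $x$ to divisors of $2N$ then forces the optimum to the divisor nearest $\sqrt{2N}$ on each side.

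The key step, and the main obstacle, is showing that this unconstrained optimum remains attainable once we demand that the factor assigned to $C$ be even. The decisive observation is that $d_1 d_2=2N$ is even, so at least one of $d_1,d_2$ must be even, since two odd divisors can never multiply to an even number. Hence the globally minimal pair $\{d_1,d_2\}$ always contains an even member, which I assign to $C$ (and the other to $B$), yielding a feasible point that meets the global lower bound and is therefore optimal. Enumerating which members are even then reproduces the three cases of the theorem exactly: if both $d_1,d_2$ are even (case (i)), either may serve as $C$, giving the two symmetric solutions $A^*=\max(D_3)/2,\,B^*=\min(D_4)$ and $A^*=\min(D_4)/2,\,B^*=\max(D_3)$; if exactly one is even (cases (ii) and (iii)), that one must be $C$, giving a unique solution. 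I would also note that the constraint $B>1$ is non-binding: for $N\ge 2$ both $d_1,d_2\ge 2$, so $B=1$, i.e. the full array, never arises at the optimum.

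Finally, for the perfect-square case I would observe that when $2N$ is a perfect square, $d_1=d_2=\sqrt{2N}$, and since $(\sqrt{2N})^2=2N$ is even the integer $\sqrt{2N}$ is itself even; therefore the evenness constraint is automatically satisfied by $C=\sqrt{2N}$, producing the single solution $A^*=\sqrt{2N}/2,\,B^*=\sqrt{2N}$. Substituting back gives a minimal element count $4A^*+2B^*-5$ of order $\mathcal{O}(\sqrt{N})$, matching the paper's claim. The only routine verification remaining is confirming $B>1$ in each listed case and that no even-$C$ factorization gives a strictly smaller sum, both of which follow directly from the convexity and closest-divisor argument above.
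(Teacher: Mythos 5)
Your proposal is correct and follows essentially the same route as the paper's proof: substituting $M=2A$ (your $C$) to recast the problem as minimizing $M+B$ over factorizations $MB=2N$ with $M$ even, invoking the Theorem~\ref{theo:scobamin} divisor argument on $2N$, and observing that since $\max(D_3)\min(D_4)=2N$ is even at least one of the two complementary divisors is even, which yields the three cases. Your added checks (that $B>1$ is non-binding and that $\sqrt{2N}$ is even in the perfect-square case) are correct and only make the argument slightly more complete than the paper's.
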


\begin{proof}
Here we provide the proof only for the special case when $2N$ is a perfect square. The proof for the general case is detailed in Appendix \ref{app:scobarproof}. 

Problem (\ref{eq:scobaropt}) is equivalent to
\begin{align}
\begin{split}
A^*,B^* =\underset{A,B\in\mathbb{N},\,B>1}{\arg\min}\quad &2A+B \\
\text{subject to}\quad &AB=N.
\label{eq:scobaropt2}
\end{split}
\end{align}
Once more, using the inequality of arithmetic and geometric means, we get
\begin{equation}
\sqrt{2AB}\leq\frac{2A+B}{2}.
\end{equation}
Consequently, the objective value is lower bounded by $2\sqrt{2N}$. This bound is attained by choosing $A=\sqrt{\frac{N}{2}}$ and $B=\sqrt{2N}$.
\end{proof}

Theorem \ref{theo:scobarmin} indicates that SCOBAR requires a minimal number of sensors which is on the order of $\sqrt{2N}$. The beamformed signal can be obtained in complexity $\mathcal{O}(N)$, similar to SCOBA. Note, however, that while SCOBAR demonstrates almost twofold improvement in resolution, the increase in the number of elements, compared to SCOBA, is roughly only by a factor of $\sqrt{2}$. 

\subsection{Minimal Physical Aperture}
\label{subsec:apt}
With the purpose of reducing cost and size, one may desire to design a compact probe with a small physical aperture. While the size of the physical aperture using COBA and SCOBAR is fixed and given by $L=(2N-1)d$, for SCOBA it is equal to $\tilde{L}=2A(B-1)d$ where $B>1$, and thus can be minimized using an appropriate choice of $A$ and $B$. This objective can be formulated as follows 
\begin{align}
\begin{split}
A^*,B^* =\underset{A,B\in\mathbb{N},\,B>1}{\arg\min}\quad &A(B-1) \\
\text{subject to}\quad &AB=N.
\label{eq:minapt}
\end{split}
\end{align}
The solution to (\ref{eq:minapt}) is given by the next theorem. 
\begin{theorem}
Consider a non-prime number $N\in\mathbb{N}^+$. Denote by $D$ the set of the non-trivial divisors of $N$, defined as
\begin{equation}
D = \big\{m\in\mathbb{N}:\, m|N,\,1<m<N\}.
\end{equation}
Then, the optimal solution to (\ref{eq:minapt}) is   
\begin{equation}
A^*=\max(D),B^*=\min(D).
\end{equation}
\label{theo:minapt}
 \end{theorem}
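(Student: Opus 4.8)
The plan is to remove the equality constraint $AB=N$ and collapse the two-variable combinatorial minimization into a one-variable monotonic problem. The first step is the key algebraic observation: since every feasible point satisfies $AB=N$, the objective simplifies to
\begin{equation}
A(B-1) = AB - A = N - A.
\end{equation}
Because $N$ is a fixed constant, minimizing $A(B-1)$ over the feasible set is \emph{exactly} equivalent to maximizing $A$ over that set. This identity is the only real insight in the proof; everything after it is a matter of reading off the right divisor.

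Next I would describe the feasible values of $A$. Feasibility requires $A,B\in\mathbb{N}$ with $AB=N$ and $B>1$, which means $B$ ranges over the divisors of $N$ satisfying $B\geq 2$ (namely the set $D\cup\{N\}$), and then $A=N/B$ is automatically a positive integer divisor of $N$. Since $A=N/B$ is strictly decreasing in $B$, maximizing $A$ is the same as minimizing $B$ over these admissible divisors. For non-prime $N$ the smallest divisor exceeding $1$ is precisely $\min(D)$ (the smallest prime factor), which is strictly less than $N$, so the minimizing choice is $B^*=\min(D)$.

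Finally I would translate this back to $A$. Setting $B^*=\min(D)$ gives $A^*=N/\min(D)$, and it remains to identify this with $\max(D)$. This is the standard divisor pairing: the map $m\mapsto N/m$ is an order-reversing bijection on the divisors of $N$ which restricts to an order-reversing bijection of $D$ onto itself, hence sends the smallest element $\min(D)$ to the largest element $\max(D)$, giving $N/\min(D)=\max(D)$. Uniqueness is immediate, since $A\mapsto N-A$ is strictly monotone and $B\mapsto N/B$ is a bijection on divisors, so the maximizer of $A$ is attained at a single feasible pair $(A^*,B^*)=(\max(D),\min(D))$.

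I do not anticipate a genuine obstacle here; the only point requiring mild care is the degenerate case $\min(D)=\max(D)$, which occurs exactly when $N$ is the square of a prime, where the formula still holds and returns $A^*=B^*=\sqrt{N}$. The reduction $A(B-1)=N-A$ is what makes the statement elementary, turning an apparent combinatorial search into the trivial task of selecting the smallest admissible value of $B$.
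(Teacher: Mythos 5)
Your proposal is correct and rests on exactly the same key step as the paper's proof: the identity $A(B-1)=AB-A=N-A$, which converts the minimization into maximizing $A$ over the divisors of $N$ with $B>1$, yielding $A^*=\max(D)$ and $B^*=N/A^*=\min(D)$. The extra details you supply (the divisor-pairing bijection and the prime-square edge case) are harmless elaborations of the same argument.
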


 \begin{proof}
Using the fact that $A(B-1)=N-A$, we rewrite (\ref{eq:minapt}) as
\begin{align}
\begin{split}
A^*,B^* =\underset{A,B\in\mathbb{N},\,B>1}{\arg\max}\quad &A \\
\text{subject to}\quad &AB=N.
\label{eq:minapt2}
\end{split}
\end{align}
It is easy to see from (\ref{eq:minapt2}) that the optimal $A$ is the maximal non-trivial divisor of $N$, i.e., $A^*=\max(D)$. Consequently, $B^*=\frac{N}{A^*}=\min(D)$ which concludes the proof. 
\end{proof}

Theorem \ref{theo:minapt} implies that when $N=2M$ with $M\in\mathbb{N}^+$ the optimal choice is $A=M$ and $B=2$, which leads to a ULA with a physical aperture that is twice of that of the original ULA. In other words, performing SCOBA on a given ULA is equivalent to performing COBA on a ULA with half the size. Note, however, that the number of elements in this case is $2M+1=N+1$, which is much larger than the minimal number achieved by Theorem \ref{theo:scobamin}. 

% \section{Summary of the Proposed Beamformers}

\section{Evaluation Results}
\label{sec:results}
We now verify the performance of the proposed beamformers in comparison to DAS. The resolution and contrast are  first evaluated using Field II simulator \cite{jensen1996field,jensen1992calculation} in MATLAB. Following that, we apply the methods on phantom data, scanned using a Verasonics imaging system, and on in vivo cardiac data acquired from a healthy volunteer.

In the following experiments, we do no apply apodization upon reception for DAS and COBA. For fair comparison, we employ weights in SCOBA to create an effective apodization of ones as in DAS. For SCOBAR we apply weights to yield an effective triangle-shaped apodization as in COBA. The full transducer array is used for transmission and element reduction is performed only on the receive end.

% For clarity and practical purposes, we summarize the proposed beamforming techniques COBA, SCOBA and SCOBAR in Algorithm \ref{alg:coba}, Algorithm \ref{alg:scoba} and Algorithm \ref{alg:scobar} respectively. 

\subsection{Simulations}
In both simulations presented here, we used an array consisting of 127 elements with 
an element width of 440 $\mu m$, a height of 6 mm and a kerf of 0.0025 mm. During transmission, the transducer generated a Hanning-windowed 2-cycle sinusoidal pulse with a center frequency of 3.5 MHz and a focal depth of 50 mm.

In COBA, SCOBA and SCOBAR a BP filter was applied using a Hanning window. The window frequency boundaries were empirically determined to well isolate the signal band to be preserved (See Fig. \ref{fig:filter}). The sampling frequency was 100 MHz. For SCOBA and SCOBAR we used $A=B=8$, which leads to the minimal numbers of 29 (23\%) and 43 (34\%) elements respectively, according to Theorem \ref{theo:scobamin} and Theorem\,\ref{theo:scobarmin}.

\begin{figure}[h]
 \centering
 \includegraphics[trim={3cm 3cm 3cm 3cm},clip,height = 4cm, width = 0.8\linewidth]{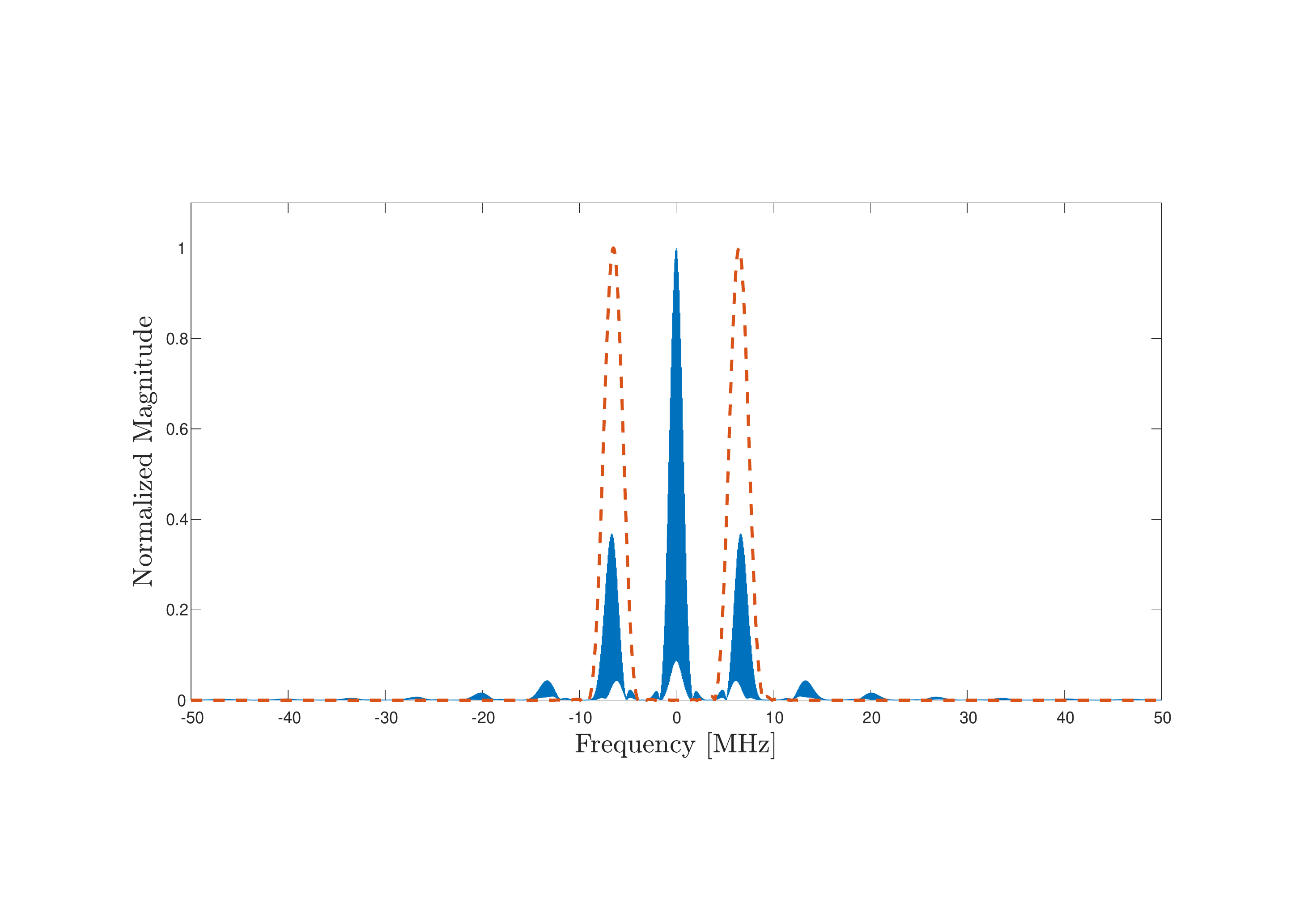}
 \caption{Fourier transforms of the impulse response of the Hanning-based BP filter (dashed line) and of the signal given by ($\ref{eq:bconv}$), which corresponds to the central simulated image line.}
  \label{fig:filter}
 \end{figure}

\subsubsection{Resolution}
We evaluate resolution using a point-reflector simulated phantom with isolated scatterers distributed in an anechoic background. Figure \ref{fig:imres} presents the results of DAS, COBA, SCOBA and SCOBA.
As seen from the images SCOBA achieved a comparable lateral resolution to that of DAS while using fewer elements. COBA outperforms DAS in terms of lateral resolution which is seen clearly in the focal depth and beyond it. SCOBAR obtains similar results to COBA using fewer elements. For a closer look, the center image line and the lateral cross-section of the   
scattering point placed at the transmission focus in 50 mm are shown in Fig.\,\ref{fig:resolution}. One can observe from Fig.\,\ref{fig:resolution}(a) that all four methods have similar axial resolution. In terms of lateral resolution, the performance of SCOBA is the same as DAS, while SCOBAR is better than DAS and COBA outperforms them all. Figure\,\ref{fig:offaxis} shows similar results obtained through a simulation which includes on-axis targets as well as off-axis targets. 

\begin{figure*}[h]
 \centering
 \includegraphics[trim={3cm 1cm 1cm 2cm},clip, height = 9cm, width = 0.9\linewidth]{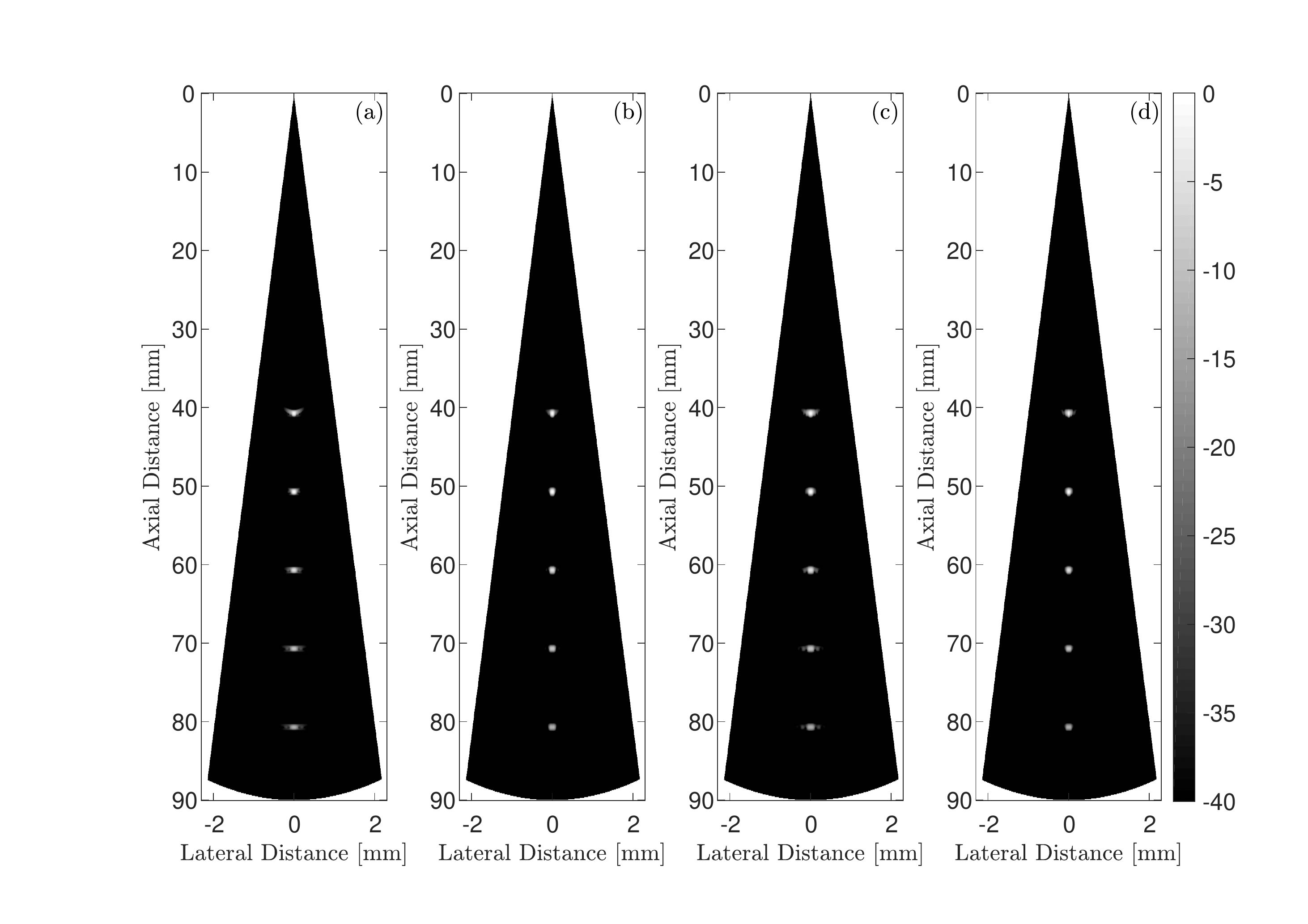}
 \caption{Images of simulated point-reflector phantom obtained by (a) DAS (127), (b) COBA (127), (c) SCOBA (29), (d) SCOBAR (43). 
%  The dynamic range for all the images is 40 dB. 
 Number in brackets refers to the number of elements used.}
  \label{fig:imres}
 \end{figure*}

\begin{figure*}
\centering
\begin{subfigure}{.5\textwidth}
  \centering
  \includegraphics[trim={2cm 3.5cm 1cm 4cm},clip, height = 3.5cm, width = 0.8\linewidth]{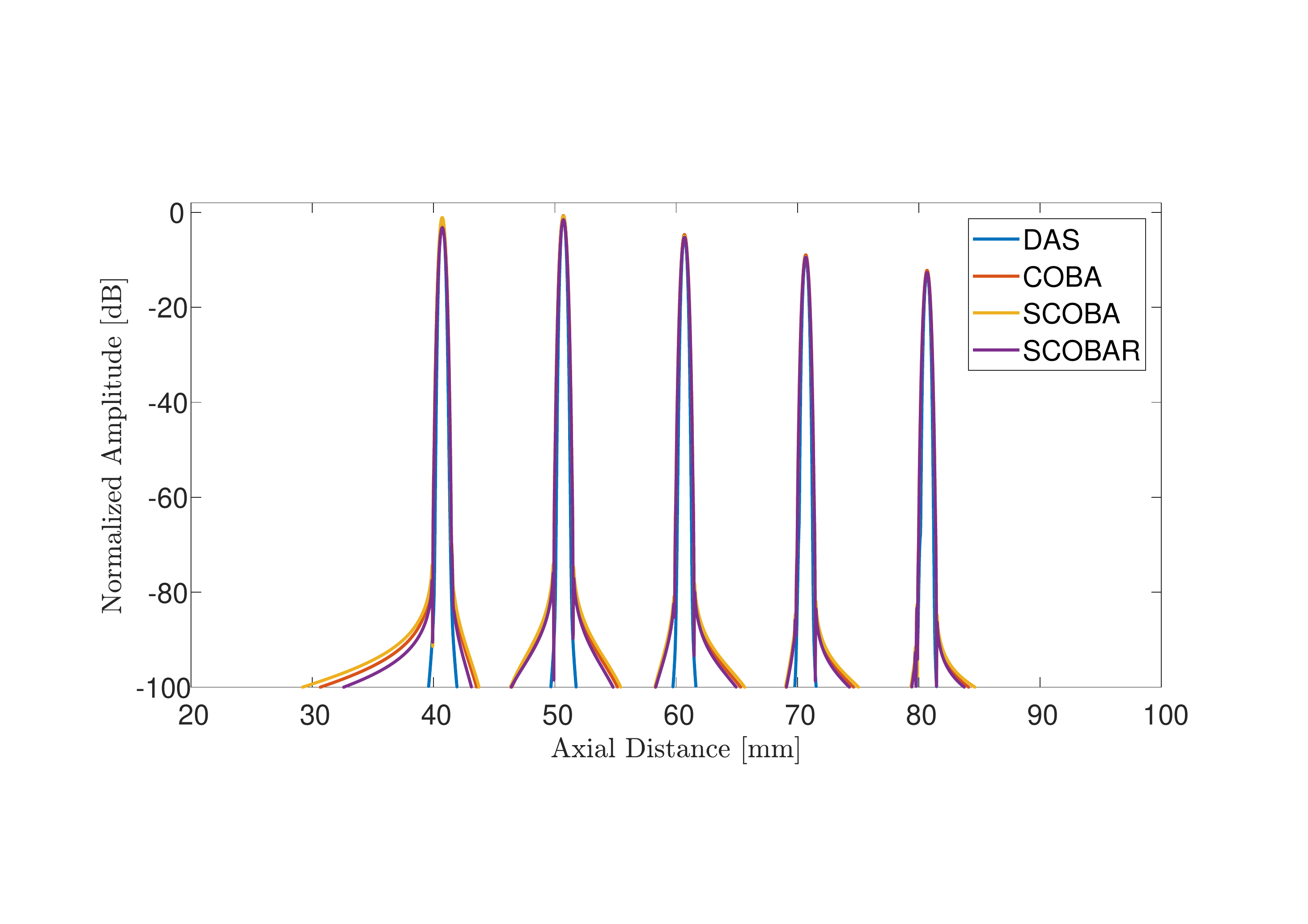}
  \caption{}
  \label{fig:sub1}
\end{subfigure}%
\begin{subfigure}{.5\textwidth}
  \centering
  \includegraphics[trim={2cm 3.5cm 1cm 4cm},clip, height = 4cm, width = 0.9\linewidth]{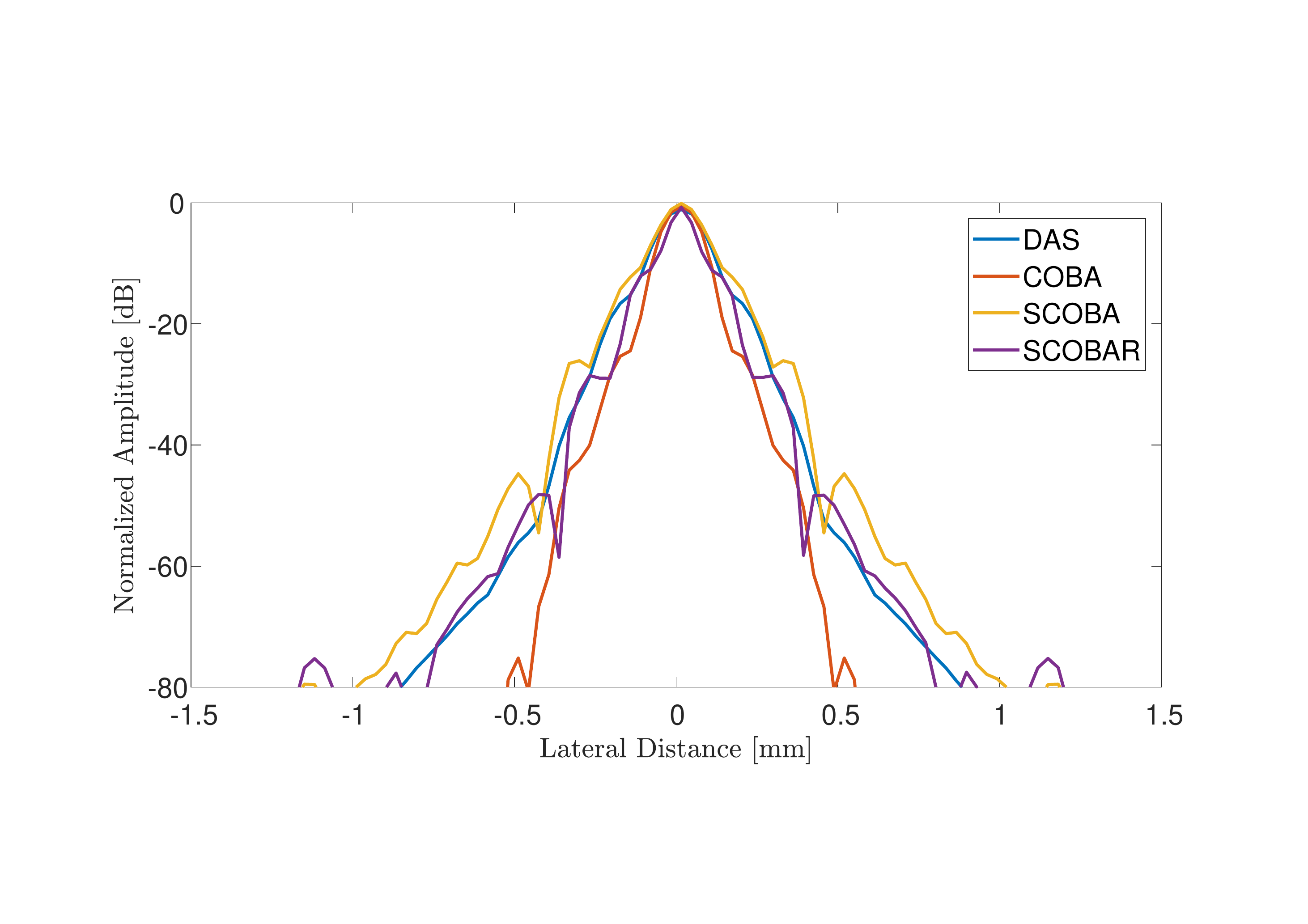}
  \caption{}
  \label{fig:sub2}
\end{subfigure}
\caption{(a) Axial profiles of all four methods at the center image line. (b) Lateral cross-sections of all four techniques at focal depth of 50 mm.}
\label{fig:resolution}
\end{figure*}

\begin{figure*}
 \centering
 \includegraphics[trim={7.5cm 2.5cm 6cm 4cm},clip, height = 11cm, width = 0.9\linewidth]{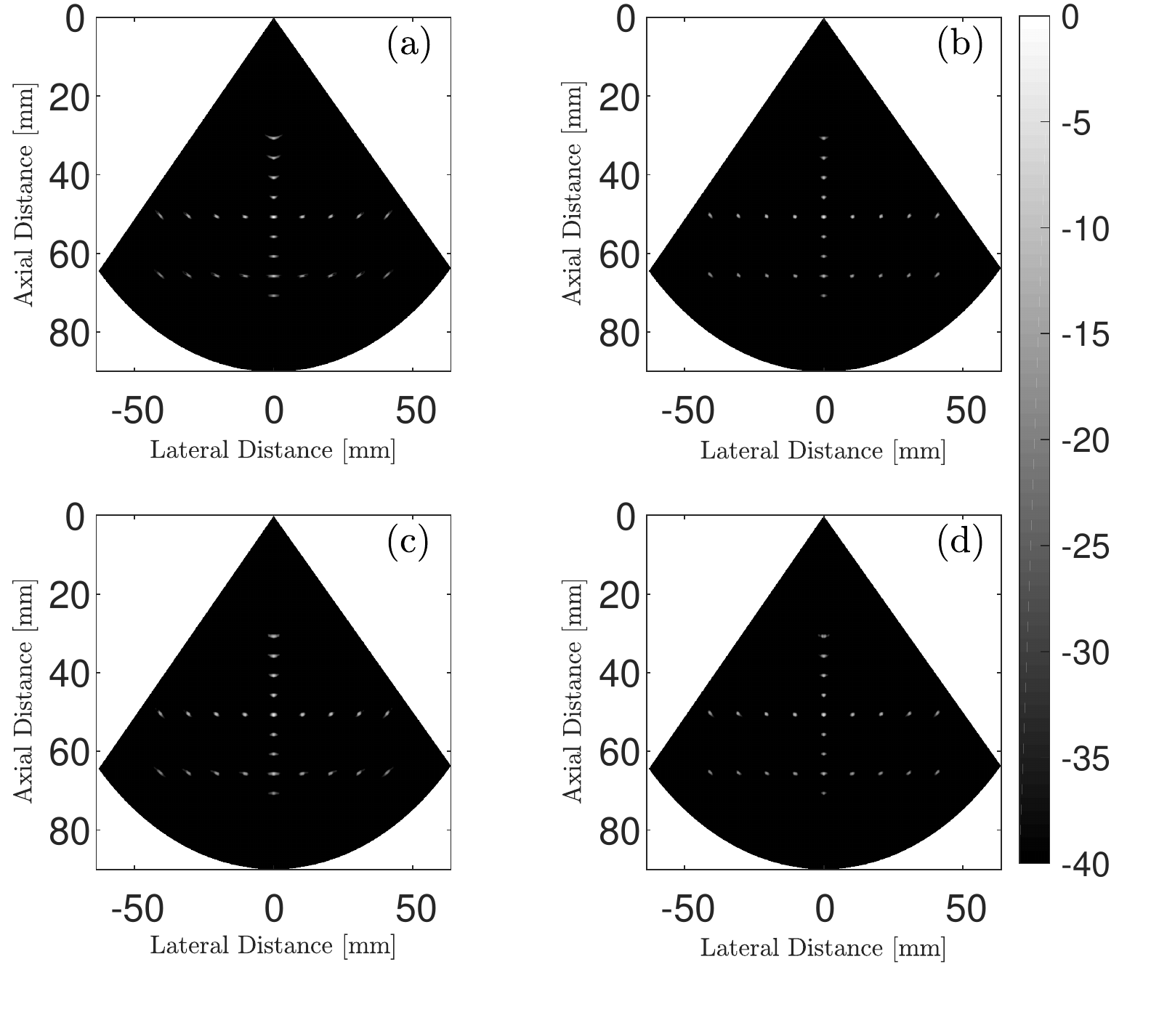}
 \caption{Images of simulated on-axis and off-axis point targets obtained by (a) DAS (127), (b) COBA (127), (c) SCOBA (29), (d) SCOBAR (43). 
%  The dynamic range for all the images is 40 dB.
 Number in brackets refers to the number of elements used.}
  \label{fig:offaxis}
 \end{figure*}

\subsubsection{Contrast}
For contrast evaluation, we use a simulated phantom of an anechoic cyst embedded in a speckle background. Figure \ref{fig:imcontrast} displays the images obtained with DAS, COBA, SCOBA and SCOBAR and provides a qualitative impression of the contrast achieved by each method. In addition, lateral cross-sections of the cyst at depth of 64 mm (dashed line in Fig. \ref{fig:imcontrast}) are presented in Fig. \ref{fig:contrast}, showing that SCOBA and DAS  have similar contrast, SCOBAR demonstrates an improvement over the latter and COBA achieves the best performance. 

A quantitative measure of contrast is the contrast
ratio (CR) \cite{lediju2011short} 
\begin{equation}
\text{CR}=20\log_{10}\left(\frac{\mu_{\text{cyst}}}{\mu_{bck}}\right)
\end{equation}
where $\mu_{\text{cyst}}$ and $\mu_{\text{bck}}$ are the mean image intensities, prior to log-compression, computed over small regions inside the cyst and in the surrounding background respectively. The regions selected are designated by a dashed circles in Fig.\,\ref{fig:imcontrast}. Consistent with previous results, the CR of DAS is -30.1 dB and of SCOBA is similar and equal to -30 dB. The CR of SCOBAR and COBA are -34 dB and -44 dB, receptively. These results emphasize the superiority of COBA and demonstrate that similar and improved performance to that of DAS can be obtained while using much fewer elements.    
\begin{figure*}[h]
 \centering
 \includegraphics[trim={3cm 3cm 1cm 4cm},clip, height = 9cm, width = 0.9\linewidth]{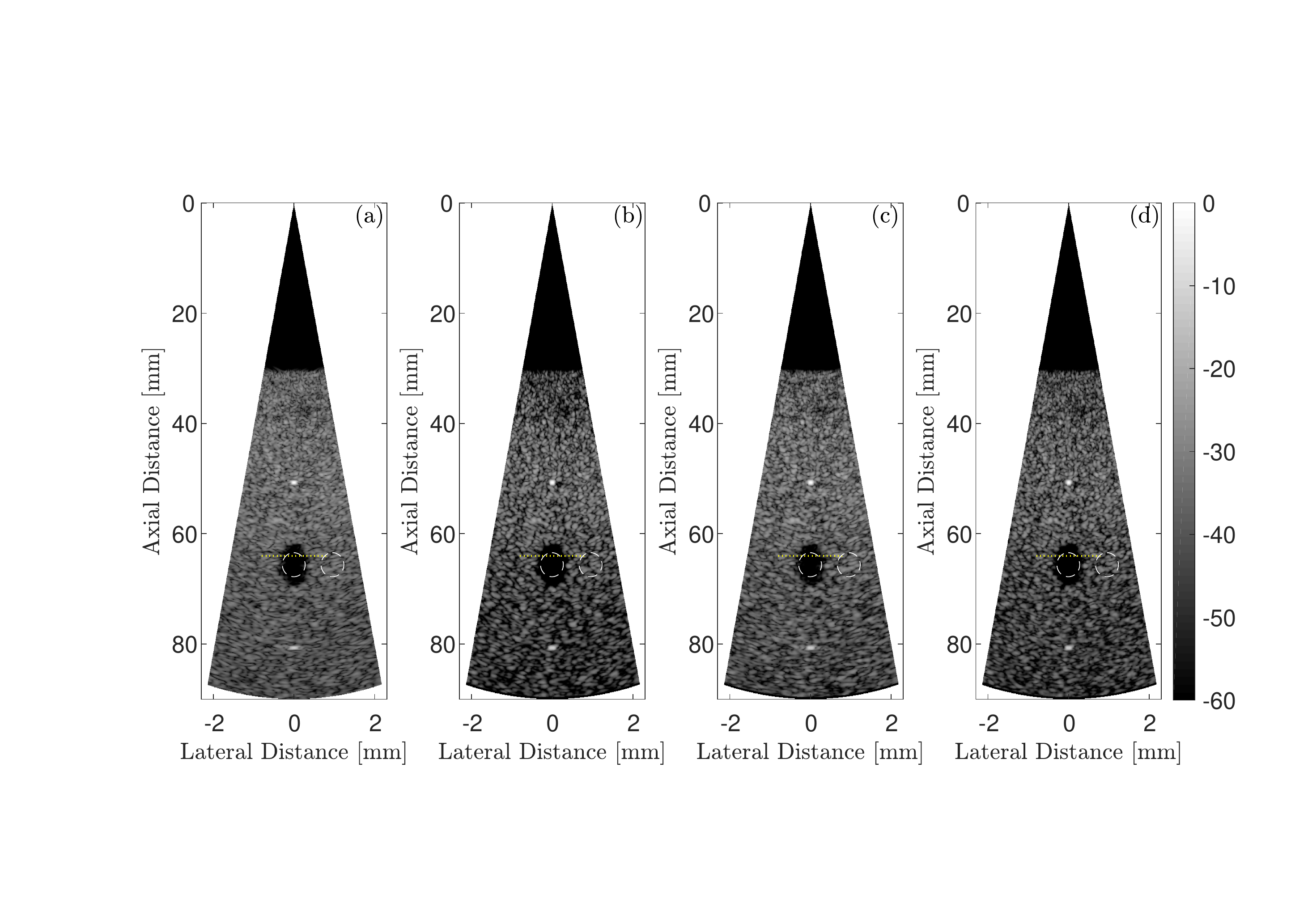}
 \caption{Images of simulated anechoic cyst phantom obtained by (a) DAS (127), (b) COBA (127), (c) SCOBA (29), (d) SCOBAR (43).
%  The dynamic range for all the images is 60 dB.
 The dashed line marks the lateral cross section presented in Fig. \ref{fig:contrast}. The dashed circles indicate the region used for computing the contrast ratios. Number in brackets refers to the number of elements used.}
  \label{fig:imcontrast}
 \end{figure*}

 \begin{figure}
 \centering
 \includegraphics[trim={3cm 3cm 1cm 3cm},clip, height = 4cm, width = 0.9\linewidth]{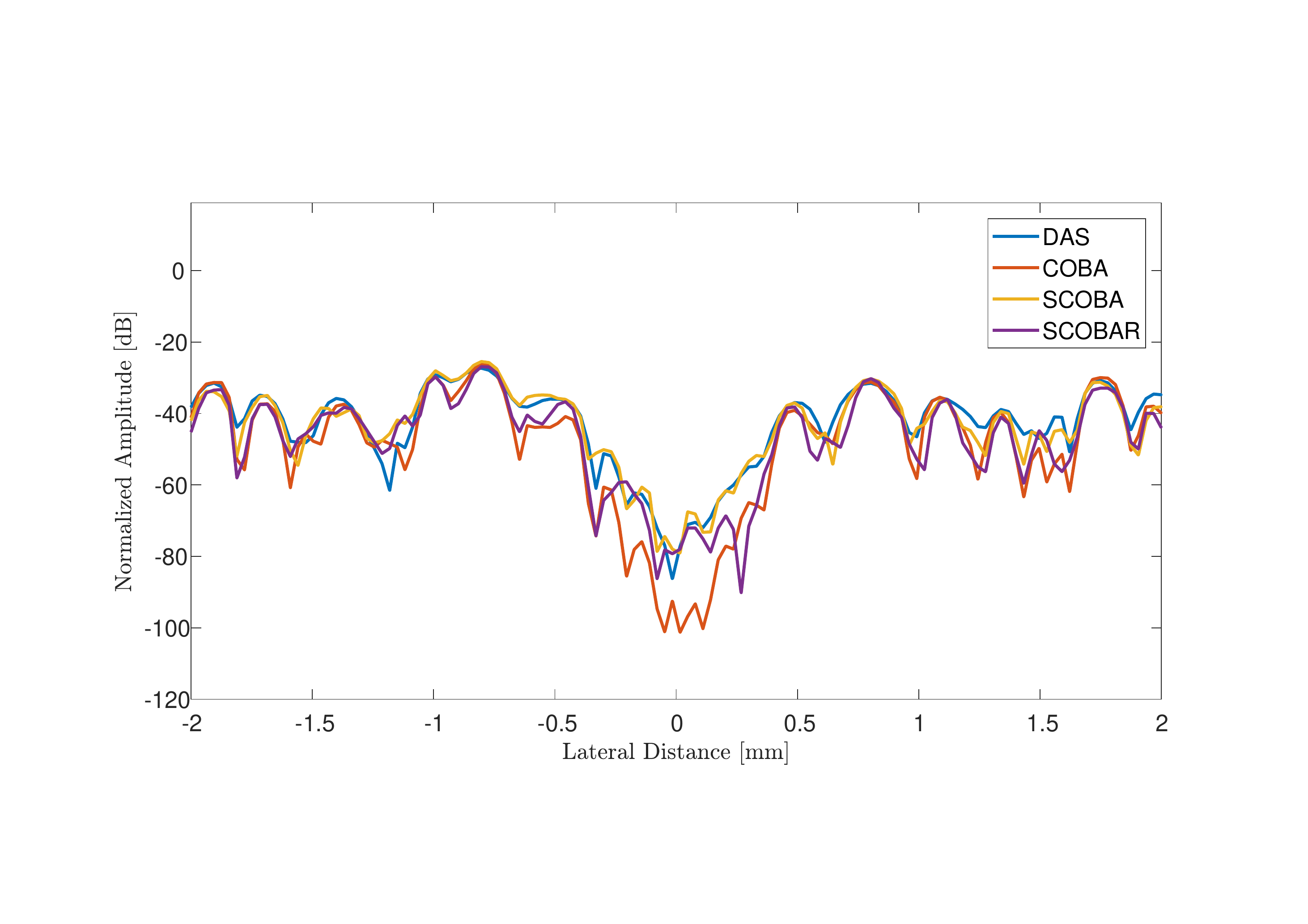}
 \caption{Lateral cross-sections of the cyst obtained by all four techniques.}
  \label{fig:contrast}
 \end{figure}

\subsection{Phantom Scans}
We next proceed to evaluate the proposed beamformers using experimental data. To that end, phantom data was acquired by a Verasonics Vantage 256 system. Tissue mimicking phantoms Gammex 403GSLE and 404GSLE were scanned by a 64-element phased array transducer P4-2v with a frequency response centered at 2.9 MHz and a sampling frequency of 11.9 MHz. The parameters for SCOBA and SCOBAR were chosen to be $A=4$ and $B=8$, resulting in 21 and 27 elements respectively. The results obtained from different phantom scans are presented in Figs\,\,\ref{fig:phantomcyst}, \ref{fig:phantomres} and \ref{fig:phantomlarge}, and include on-axis and off-axis targets, various cysts and resolution target groups. Zoom in on areas of cysts and resolution targets are shown in Figs\,\,\ref{fig:phantomzoomin} and \ref{fig:phantomzoomincyst} respectively. As can be seen, COBA exhibits an improvement over DAS in terms of contrast and resolution, SCOBA and SCOBAR achieve similar performance to DAS and COBA respectively, while using fewer elements.  
\begin{figure*}
 \centering
 \begin{minipage}{1\textwidth}
 \includegraphics[trim={2cm 8cm 2cm 8cm},clip, height = 4cm, width = 0.9\linewidth]{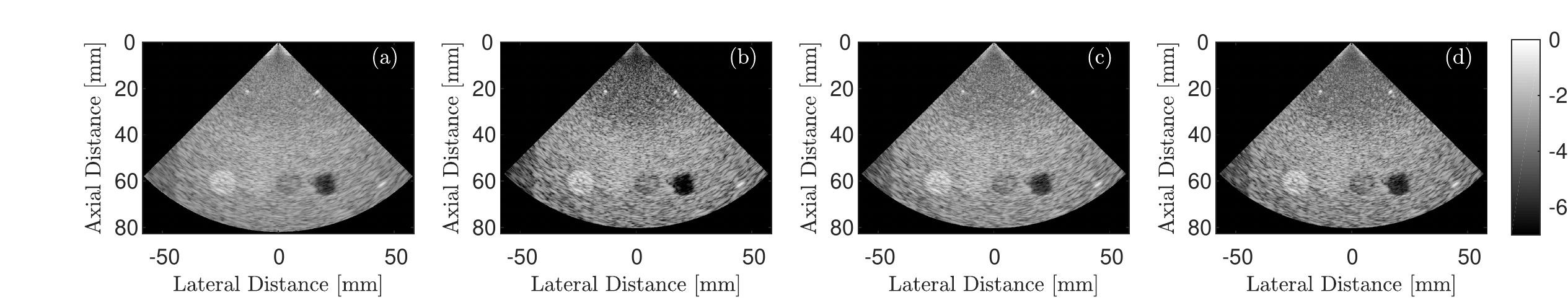}
 \caption{Images of GAMMEX 403GSLE which include pin and cystic targets obtained with (a) DAS (63), (b) COBA (63), (c) SCOBA (21), (d) SCOBAR (29). Number in brackets refers to the number of elements used.}
  \label{fig:phantomcyst}
  \par\vfill
 \includegraphics[trim={2cm 8cm 2cm 8cm},clip, height = 4cm, width = 0.9\linewidth]{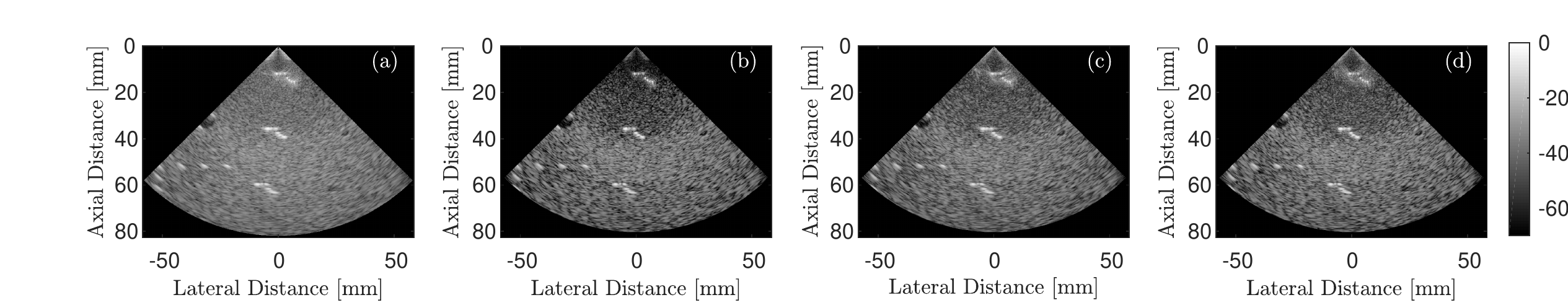}
 \caption{Images of GAMMEX 404GSLE which include pin and resolution targets obtained with (a) DAS (63), (b) COBA (63), (c) SCOBA (21), (d) SCOBAR (29). Number in brackets refers to the number of elements used.}
  \label{fig:phantomres}
   \includegraphics[trim={2cm 8cm 2cm 8cm},clip, height = 4cm, width = 0.9\linewidth]{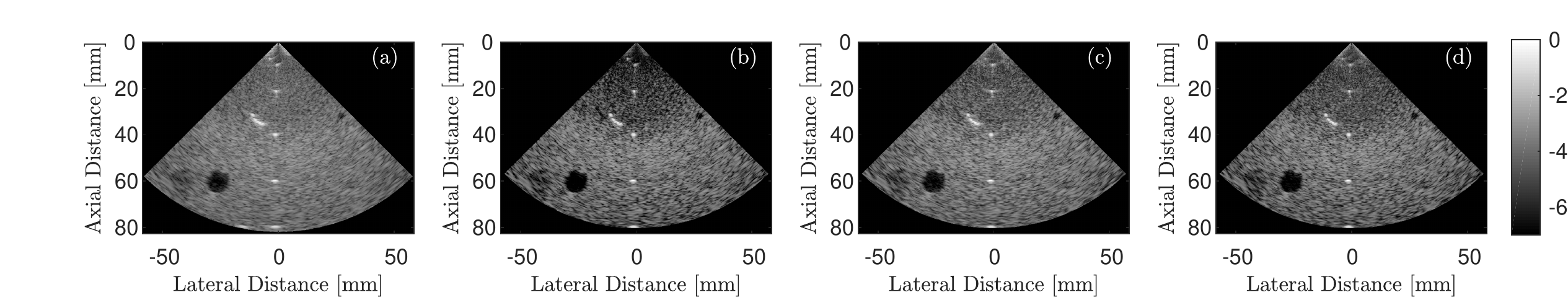}
 \caption{Images of GAMMEX 403GSLE which include resolution and cystic targets obtained with (a) DAS (63), (b) COBA (63), (c) SCOBA (21), (d) SCOBAR (29). Number in brackets refers to the number of elements used.}
  \label{fig:phantomlarge}
   \includegraphics[trim={2cm 8cm 2cm 8cm},clip, height = 4cm, width = 0.9\linewidth]{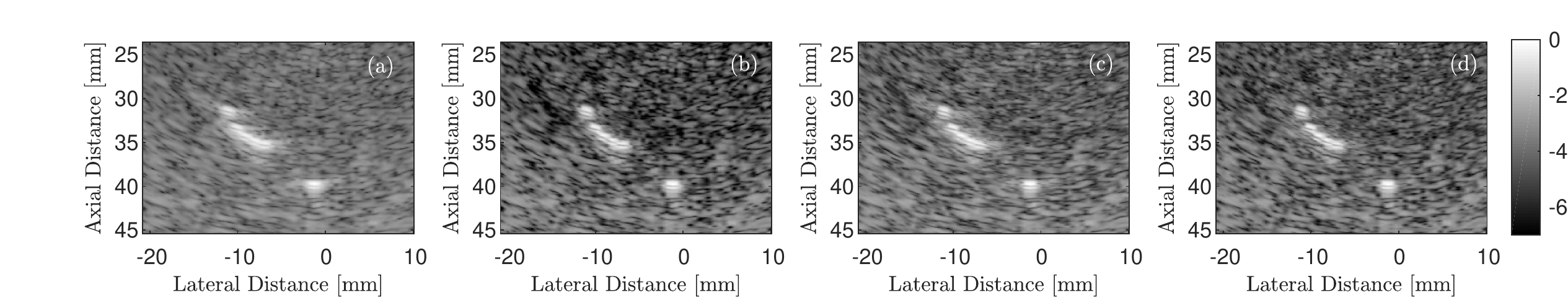}
 \caption{Zoom in on the resolution targets shown in Fig. \ref{fig:phantomlarge}.}
  \label{fig:phantomzoomin}
   \includegraphics[trim={2cm 8cm 2cm 8cm},clip, height = 4cm, width = 0.9\linewidth]{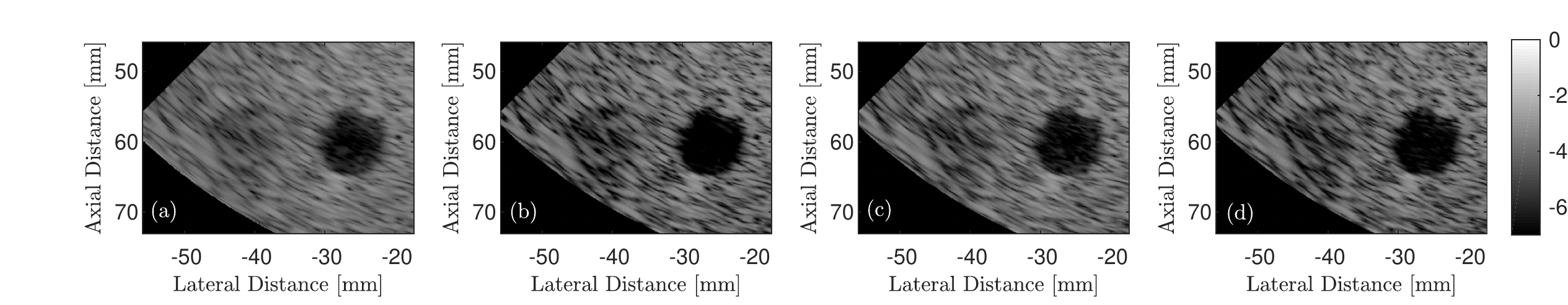}
 \caption{Zoom in on the cystic targets displayed in Fig. \ref{fig:phantomlarge}.}
  \label{fig:phantomzoomincyst}
  \end{minipage}%
 \end{figure*}

\subsection{In Vivo Acquisition}
Finally, we apply the proposed methods on \textit{in vivo} cardiac data. The acquisition
was performed with a GE breadboard ultrasonic scanner where 63 acquisition
channels were used. The radiated depth was 16 cm, the probe carrier frequency was 3.4 MHz and the system sampling frequency was 16 MHz. For COBA, SCOBA and SCOBAR a Hanning window-based high-pass filter was used (rather than a band-pass) with a cutoff frequency of 5 MHz, as shown in Figure \ref{fig:filterIV}. The parameters for SCOBA and SCOBAR were set to $A=4$ and $B=8$, leading to the minimal numbers of elements that can be obtained as stated in Theorems \ref{theo:scobamin} and \ref{theo:scobarmin}. Consequently, 21 and 27 elements out of 63 were used by SCOBA and SCOBAR respectively.

The results are presented in Fig. \ref{fig:invivo}. Clearly, COBA outperforms DAS in terms of image quality; the background noise is reduced and the anatomical structures are better highlighted. SCOBA achieves similar resolution as DAS, whereas, SCOBAR yields notable resolution improvement. Moreover, both the sparse beamformers obtain a low noise floor compared to DAS and thus the heart walls are better defined. These results validate that using the proposed techniques a reduction in the number of elements can be attained without compromising and even improving the image quality in comparison to standard DAS.  

\begin{figure*}
 \centering
 \includegraphics[trim={3cm 3cm 3cm 4cm},clip, height = 9cm, width = 0.9\linewidth]{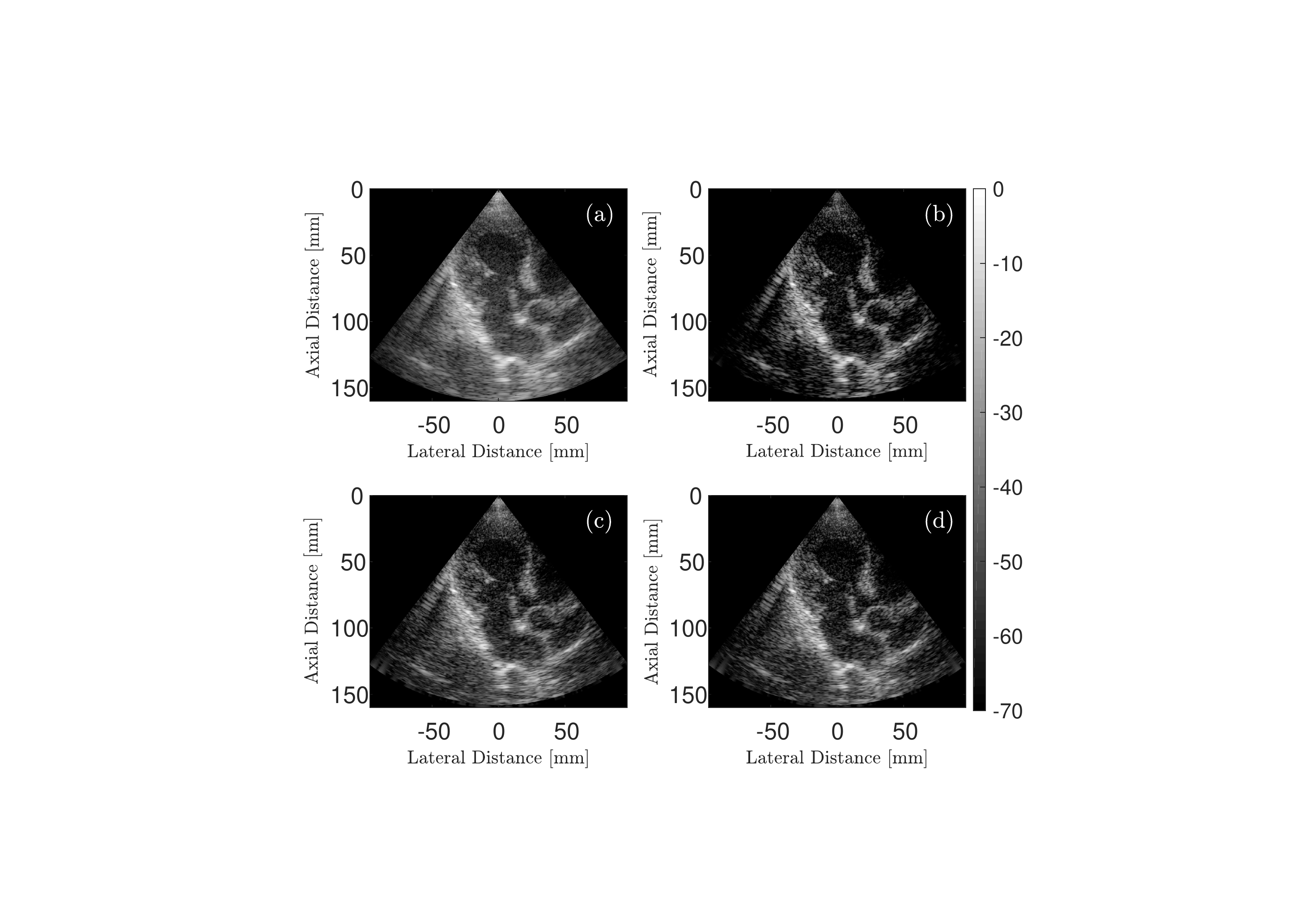}
 \caption{Cardiac images obtained with (a) DAS (63), (b) COBA (63), (c) SCOBA (21), (d) SCOBAR (29).
%  All the images are presented with a dynamic range of 70 dB.
 Number in brackets refers to the number of elements used.}
  \label{fig:invivo}
 \end{figure*}

\begin{figure}
 \centering
 \includegraphics[trim={3cm 3cm 3cm 3cm},clip,height = 3.5cm, width = 0.7\linewidth]{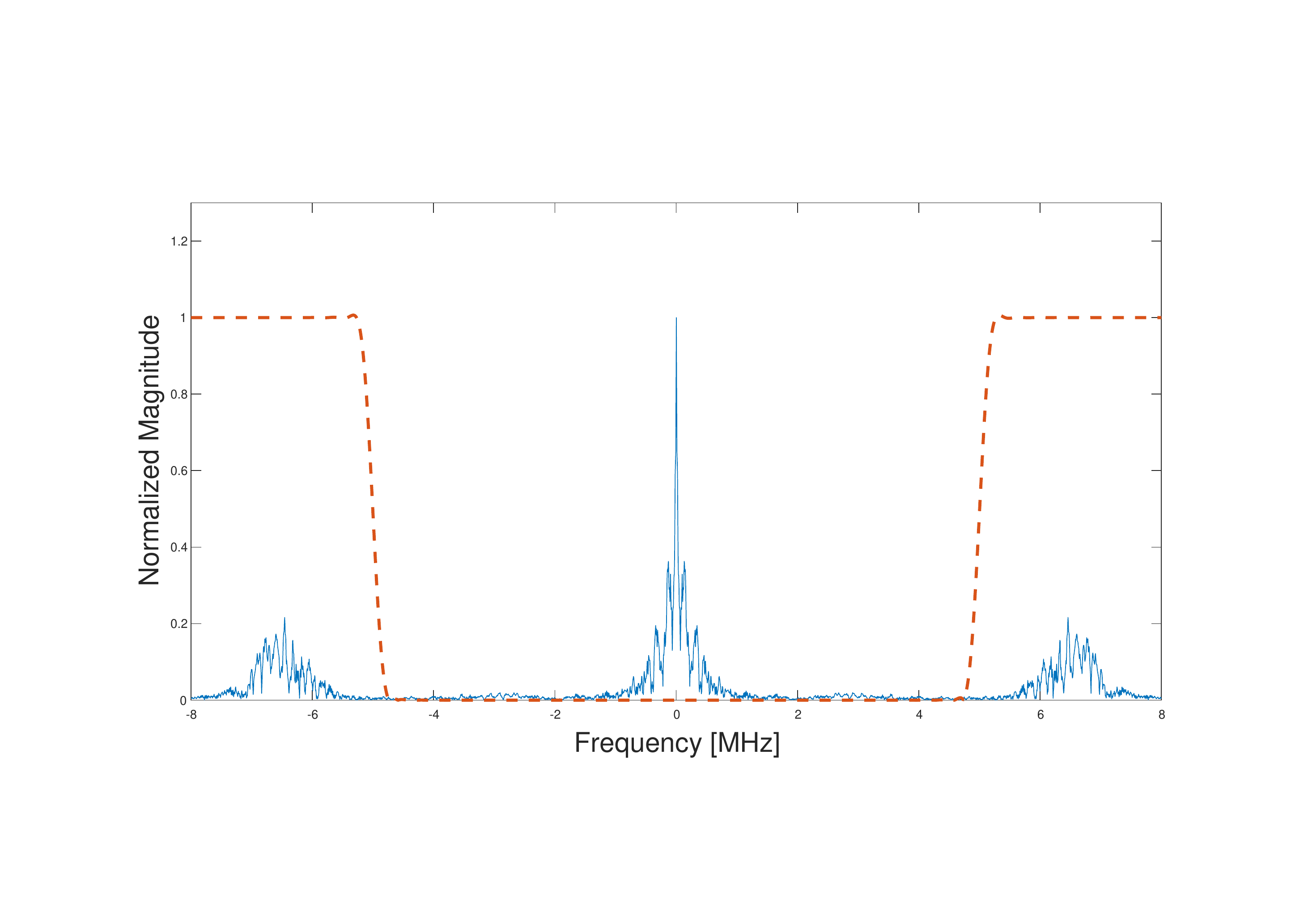}
 \caption{Fourier transforms of the impulse response of the Hanning-based high-pass filter (dashed line) and of the signal given by ($\ref{eq:bconv}$), which corresponds to the central in-vivo image line.}
  \label{fig:filterIV}
 \end{figure}

\section{Conclusion}
\label{sec:conclude}
In this paper we proposed three techniques for beamforming upon reception. First, we introduced a beamformer called COBA, which is based on convolution of the RF signals and is implemented efficiently using FFT. Then, we introduced the concept of sum co-array to analyze the beam pattern generated by COBA, showing it yields twofold enhancement in lateral resolution, compared to standard DAS, and provides contrast improvement. This was validated using qualitative and quantitative measurements in simulations which emphasized that COBA leads to an increase in resolution, contrast and noise suppression. In addition, an \textit{in vivo} scan was provided for visual assessment of the resulting image quality. 

Based on COBA and the sum co-array, we next presented two sparse beamformers, SCOBA and SCOBAR, which utilize a reduced number of elements. SCOBA requires much fewer elements without degrading image quality compared to DAS, whereas SCOBAR offers an improvement of resolution and contrast at the expense of a smaller, yet sizable, element reduction. The minimal number of elements in both algorithms is proportional to the square root of the number used with a full array. In addition, SCOBA may allow for a probe with a smaller physical aperture. The performance of SCOBA and SCOBAR was studied using both simulated and experimental data, verifying that only a small number of elements can be used while maintaining or improving the image quality compared to DAS. Images of \textit{in vivo} cardiac scans demonstrate that SCOBA and SCOBAR are suitable for clinical use. 

To conclude, the proposed methods provide a prominent improvement of contrast and lateral resolution in comparison with DAS. In addition, they allow for a significant element reduction while preserving or enhancing image quality. Thus, they enable the design of cheap, portable probes and low power ultrasound systems with a low computational load, paving the way to 3D imaging and wireless operation. 

\appendices
\section{Discrete Convolution}
\label{app:polyprod}
Consider two discrete sequences $a$ and $b$ of length $N+1$ and $M+1$ respectively. The discrete linear convolution of $a$ and $b$ is a sequence $c$ of length $L=N+M+1$ whose entries are given by
\begin{equation}
c_s = \sum_{i=0}^s a_{s-i}b_i,\quad s=0,1,...,L-1.
\end{equation}
Here $a$ and $b$ are zero padded to be of length $L$.

Let $f$ and $g$ be two polynomials defined by 
\begin{equation}
f(p)=\sum_{n=0}^N a_np^n,\quad g(p)=\sum_{m=0}^M b_mp^m.
\end{equation}
Their product is
\begin{equation}
h(p)\triangleq f(p)g(p)= \sum_{n=0}^N \sum_{m=0}^M  a_nb_mp^np^m.
\end{equation}
The latter can be viewed as a sum of single powers of $p$ by substituting $s=n+m$
\begin{equation}
h(p)= \sum_{s=0}^{L-1} \left(\sum_{(n,m):\, n+m=s}  a_nb_m\right) p^s.
\end{equation}
The coefficients $c_s$ of this polynomial are given by the inner summation which can be expressed as
\begin{equation}
c_s=\sum_{(n,m):\, n+m=s}  a_nb_m = \sum_{i=0}^s  a_{s-i}b_s,
\end{equation}  
where the second equality is obtained by zero-padding $a$ and $b$ to be of length $L$.
Thus, the coefficients of $h(p)$ are the linear convolution of $a$ and $b$.  

\section{Proof of Theorem \ref{theo:scobamin}}
\label{app:scobaproof}
We consider the equivalent problem given by (\ref{eq:scobaopt2}). It is clear from the constraints that $A$ and $B$ are both divisors of $N$ and we can express $B$ using $A$ as $B=\frac{N}{A}$. Without loss of generality, we assume that $A\leq B$ which leads to the following formulation
\begin{equation}
A^*=\underset{m\in D_1}{\arg\min}\quad m+\frac{N}{m}.
\end{equation}

Next, we define a function $g:[1,\sqrt{N}]\rightarrow\mathbb{R}^+$ over a continuous domain
\begin{equation*}
g(x)=x+\frac{N}{x}.
\end{equation*}
The function $g(x)$ is continuous and differentiable over the open
domain $(1,\sqrt{N})$. Its derivative is given by
\begin{equation*}
\frac{dg}{dx}=1-\frac{N}{x^2}< 0,
\end{equation*}
hence, $g(x)$ is monotonically decreasing. Using the fact that $D_1\subseteq [1,\sqrt{N}]$ and denoting $n=\max(D_1)$, it holds that 
\begin{equation*}
g(n)< g(m),\quad m\in D_1,\, m\neq n.
\end{equation*}
Therefore, the optimal solution is given by $A^*=n=\max(D_1)$ and $B^*=\min(D_2)$ accordingly. The solution for $B\leq A$ is established with the same arguments by interchanging the roles of $A$ and $B. \hfill \square$
\newline

Notice that when $N$ is a perfect square we have that $\max(D_1)=\min(D_2)$, leading to the single solution described earlier. In general, there are two optimal solutions, however, the solution in which $B\geq A$ is superior to the second one in terms of mutual coupling.

\section{Proof of Theorem \ref{theo:scobarmin}}
\label{app:scobarproof}

We consider  the equivalent problem (\ref{eq:scobaropt2}). Denoting $M=2A\in \mathbb{E}$, we rewrite it as
\begin{align}
\begin{split}
M^*,B^* =\underset{M,B\in\mathbb{N},\,B>1}{\arg\min}\quad &M+B \\
\text{subject to}\quad &MB=2N, \\
& M\in \mathbb{E}.
\end{split}
\label{eq:evenproblem}
\end{align}
Ignoring for a moment the last constraint, problem (\ref{eq:evenproblem}) is similar to (\ref{eq:scobaopt2}) with $2N$ replacing $N$. Hence, by similar arguments to those presented in the proof of Theorem \ref{theo:scobamin}, we have that
\begin{align}
\begin{split}
&M^*=\max(D_3)\text{ and } B^*=\min(D_4), \\
&M^*=\min(D_4)\text{ and } B^*=\max(D_3).
\label{eq:optimalmn}
\end{split}
\end{align}
Now, we enforce the constraint $M\in\mathbb{E}$. Since $\max(D_3)\min(D_4)=2N$ either $\max(D_3)$ or $\min(D_4)$ are even, or both, therefore, at least one of the optimal solutions in (\ref{eq:optimalmn}) is valid. Thus, taking into account that $A^*=M^*/2$ we get the optimal solutions presented for each one of the three cases. $\hfill \square$
\newline

Notice that when $2N$ is a perfect square we have that $\max(D_3)=\min(D_4)=\sqrt{2N}\in\mathbb{E}$ and the solution is $A=\sqrt{\frac{N}{2}},\,B=\sqrt{2N}$, as presented before.

% \section*{Acknowledgment}
\FloatBarrier
\bibliographystyle{IEEEtran}
\bibliography{IEEEabrv,REFS}

% that's all folks
\end{document}